\definecolor{mydarkblue}{rgb}{0,0.08,0.45}
\newcommand\addstarred[1]{%
    \expandafter\let\csname\string#1@nostar\endcsname#1%
    \edef#1{\noexpand\@ifstar\expandafter\noexpand\csname\string#1@star\endcsname\expandafter\noexpand\csname\string#1@nostar\endcsname}%
    \expandafter\newcommand\csname\string#1@star\endcsname%
}
\newcommand{\1}[1]{\mathbbm{1}{\{#1\}}}
\DeclarePairedDelimiter\ceil{\lceil}{\rceil}
\DeclarePairedDelimiter\floor{\lfloor}{\rfloor}
\DeclarePairedDelimiter\abs{\lvert}{\rvert}
\DeclareMathOperator*{\argmax}{arg\,max}
\renewcommand{\ge}{\geqslant}
\renewcommand{\le}{\leqslant}
\renewcommand{\P}{\mathbb{P}}
\newcommand{\upbra}[1]{^{(#1)}}
\newcommand{\type}[1]{Type-\uppercase\expandafter{\romannumeral#1}}
\newtheorem{theorem}{Theorem}
\newtheorem{lemma}[theorem]{Lemma}
\let\oldComment=\Comment
\renewcommand{\Comment}[1]{\oldComment{\texttt{#1}}}
\algnewcommand{\LeftComment}[1]{\Statex $\triangleright$ \texttt{#1}}
\algnewcommand{\RightComment}[1]{\Statex \leavevmode\hfill$\triangleright$ \texttt{#1}}
\algnewcommand\algorithmicinput{\textbf{Input:}}
\algnewcommand\Input{\item[\algorithmicinput]}%
\algnewcommand\algorithmicoutput{\textbf{Output:}}
\algnewcommand\Output{\item[\algorithmicoutput]}%
\algnewcommand\algorithmicinitial{\textbf{initialize:}}
\algnewcommand\Initial{\item[\algorithmicinitial]}%
\def\BibTeX{{\rm B\kern-.05em{\sc i\kern-.025em b}\kern-.08em
    T\kern-.1667em\lower.7ex\hbox{E}\kern-.125emX}}
    \newcommand{\oracle}{{\normalfont \texttt{BestPath}}}
    \newcommand{\bench}{{\normalfont \texttt{Bench}}\xspace}
    \newcommand{\rad}{{\normalfont \texttt{rad}}}
    \newcommand{\pl}{{\normalfont \text{(path)}}}
    \newcommand{\est}{{\normalfont \texttt{LinkEst}}}
    \newcommand{\Lmax}{{L_{\text{\normalfont max}}}}
    \newcommand{\Lmaxsquare}{{L_{\text{\normalfont max}}^2}}
    \newcommand{\bpilink}{{\normalfont \texttt{BeQuP-Link}}\xspace}
    \newcommand{\bpipath}{{\normalfont \texttt{BeQuP-Path}}\xspace}
    \newcommand{\bpi}{{\normalfont \texttt{BeQuP}}\xspace}
    \newcommand{\complexity}{Q}
\begin{document}

\title{Learning Best Paths in Quantum Networks\\
    \thanks{{\em Acknowledgments}---This research was supported in part by the NSF grant CNS-1955744, NSF-ERC Center for Quantum Networks grant
        EEC-1941583, and DOE Grant AK0000000018297.
        The work of John C.S. Lui was supported in part by the RGC SRFS2122-4S02.
        The work of Mohammad Hajiesmaili is supported by NSF CNS-2325956, CAREER-2045641, CPS-2136199, CNS-2102963, and CNS-2106299.
        The work of Xutong Liu was partially supported by a fellowship award from the Research Grants Council of the Hong Kong Special Administrative Region, China (CUHK PDFS2324-4S04).
        XW and ML thanks Dr. Yihan Du for providing implementation suggestions on link estimation.
        Maoli Liu and Xutong Liu are the corresponding authors.}
}

\author{
    \author{\IEEEauthorblockN{Xuchuang Wang$^1$, Maoli Liu$^2$, Xutong Liu$^{1,3}$, Zhuohua Li$^2$, Mohammad Hajiesmaili$^1$, John C.S. Lui$^2$, Don Towsley$^1$}
        \IEEEauthorblockA{$^1$\textit{University of Massachusetts, Amherst}, $^2$\textit{The Chinese University of Hong Kong}, $^3$\textit{Carnegie Mellon University}
            \\
            \{xuchuangwang, xutongliu, hajiesmaili, towsley\}@cs.umass.edu, \{mlliu, zhli, cslui\}@cse.cuhk.edu.hk}
    }
}

\maketitle

\begin{abstract}
    Quantum networks (QNs) transmit delicate quantum information across noisy quantum channels. Crucial applications, like quantum key distribution (QKD) and distributed quantum computation (DQC), rely on efficient quantum information transmission. Learning the best path between a pair of end nodes in a QN is key to enhancing such applications. This paper addresses learning the best path in a QN in the online learning setting. We explore two types of feedback: ``link-level'' and ``path-level''. Link-level feedback pertains to QNs with advanced quantum switches that enable link-level benchmarking. Path-level feedback, on the other hand, is associated with basic quantum switches that permit only path-level benchmarking. We introduce two online learning algorithms, BeQuP-Link and BeQuP-Path, to identify the best path using link-level and path-level feedback, respectively. To learn the best path, BeQuP-Link benchmarks the critical links dynamically, while BeQuP-Path relies on a subroutine, transferring path-level observations to estimate link-level parameters in a batch manner. We analyze the quantum resource complexity of these algorithms and demonstrate that both can efficiently and, with high probability, determine the best path. Finally, we perform NetSquid-based simulations and validate that both algorithms accurately and efficiently identify the best path.
\end{abstract}

\begin{IEEEkeywords}
    quantum network, best path selection, online learning, quantum key distribution
\end{IEEEkeywords}

\section{Introduction}\label{sec:intro}






Quantum networks (QNs) consist of quantum devices that exchange quantum information through quantum channels~\citep{wehner2018quantum}. These devices perform quantum operations on quantum states, such as quantum measurements and gates, to process quantum information. Quantum networks enable various applications, including quantum sensing~\citep{degen2017quantum}, quantum cryptography~\citep{bennet1984quantum}, and quantum computing~\citep{nielsen2001quantum}.

Direct long-distance quantum communication between two nodes is challenging due to the exponential decay rate of quantum information during transmission~\citep{wehner2018quantum}. To mitigate this decay, quantum repeaters~\citep{briegel1998quantum} have been proposed as intermediate nodes to split long-distance quantum communication into shorter segments.
Recently, \citet{dai2021entanglement} further extend the requirements of quantum repeaters to be able to connect multiple nodes, called \emph{quantum switches}.
Figure~\ref{fig:qn-illustration} illustrates a quantum network composed of quantum links and quantum switches.
We refer to direct connections between two quantum switches
as \emph{quantum links}, and connections involving intermediate nodes as \emph{quantum paths}, which consist of multiple quantum links.

However, even with quantum repeaters and switches, the no-cloning theorem of quantum mechanics~\citep{wootters1982single} prevents quantum information from being amplified, leading to noise and decoherence during transmission. To evaluate these errors, \emph{fidelity} has been proposed as a metric of the quantum channel~\citep{nielsen2001quantum}.


\begin{figure}[tp]
    \centering
    \includegraphics[width=0.47\textwidth]{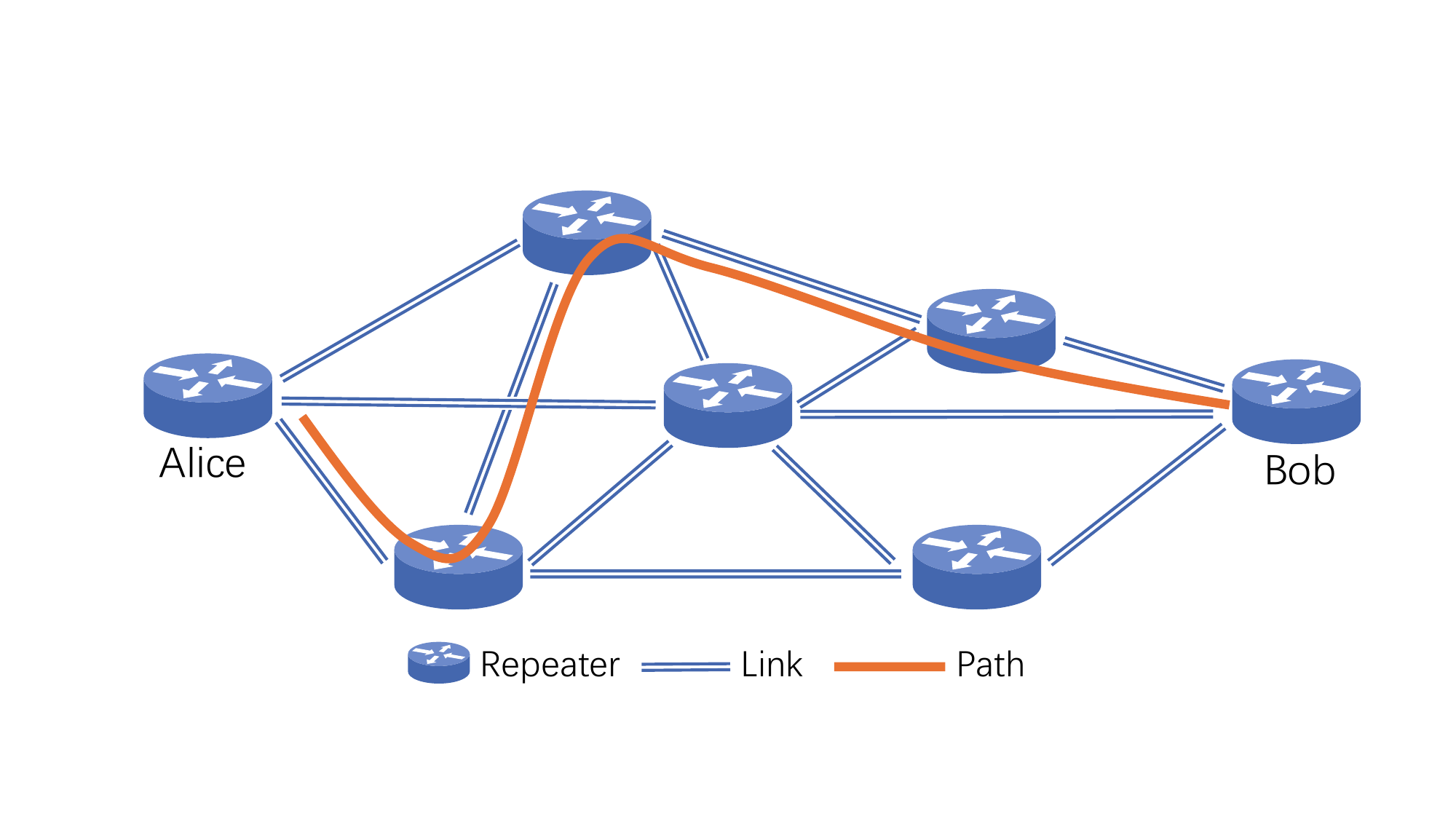}
    \caption{A quantum network with multiple links and paths
    }
    \label{fig:qn-illustration}
\end{figure}

This paper focuses on identifying the optimal path that maximizes fidelity between two nodes in a quantum network. Given a set of \(K\) paths and \(L\) links between two nodes, we first illustrate how to estimate the fidelity of a path using network benchmarking~\citep{helsen2023benchmarking}. To estimate these parameters, we introduce two types of quantum network benchmarking: (1) link-level and (2) path-level. Link-level benchmarking evaluates fidelity parameters of individual quantum links and is suitable for quantum networks whose switches provide access to quantum operations. Path-level benchmarking evaluates the fidelity parameters of quantum paths and is suitable for networks that do not provide access to switches. We define this problem as learning the \underline{Be}st \underline{Qu}antum \underline{P}ath (\bpi).

In a link-level feedback environment, a naive approach would benchmark all links in the quantum network to estimate their fidelity parameters and identify the highest fidelity path. However, this approach is resource-intensive, especially if multiple paths share links. For example, if some links are shared by all paths, benchmarking only the unique links in each path would suffice to identify the best path, making benchmarking overlapped links redundant.
To address this, we propose the \bpilink algorithm, which sequentially benchmarks unique links in chosen paths to reduce resource costs while ensuring the identification of the best path with high probability. We also derive an upper bound on the quantum resource requirement of this algorithm.

In a path-level feedback environment, the potentially combinatorial explosion in the number of paths presents a challenge. To address this, we propose an estimation procedure that maps path parameter estimates to link parameter estimates via linear regression. Using these estimates, we introduce the \bpipath algorithm, which maintains and successively reduces a candidate set of paths until only the best path remains. This algorithm identifies the best path with high confidence
and low quantum resource costs.

The contributions of this paper are as follows:

\begin{itemize}
    \item We propose a quantum network model, \bpi, featuring a benchmarking subroutine to identify the path with the highest fidelity under both link-level and path-level benchmarking (Section~\ref{sec:model}).
    \item We introduce two algorithms, \bpilink for link-level benchmarking (Section~\ref{sec:link}) and \bpipath for path-level benchmarking (Section~\ref{sec:path}). Both algorithms identify the best path with high probability, and we analyze their quantum resource complexity requirements.
    \item Besides the fidelity metric for quantum communication, another critical application of QNs is Quantum Key Distribution (QKD)~\citep{bennet1984quantum}.
          The efficiency of a key distribution protocol are quantified by the secret key fraction (SKF).
          We extend our model and algorithms to identify the best path with the highest SKF in Section~\ref{sec:qkd-extension}.
    \item We validate the performance of the proposed algorithms and compare them with existing baselines using the quantum network simulator NetSquid~\citep{coopmans2021netsquid} (Section~\ref{sec:simulation}).
\end{itemize}

\section{Background}
\label{sec:background}

\subsection{Quantum Network}
\label{subsec:quantum-network}

Quantum networks transmit quantum states between nodes, playing a critical role in technologies like QKD~\citep{bennet1984quantum}, quantum sensing~\citep{degen2017quantum}, and quantum computing~\citep{nielsen2001quantum}. However, long-distance transmission is challenging due to communication rate decay~\citep{wehner2018quantum}. To overcome this, networks use a hierarchical structure with shorter segments connected by quantum repeaters~\citep{briegel1998quantum} and switches~\citep{dai2021entanglement}.
Quantum networks face three primary noise (error) sources: (1) loss errors during quantum state transmission, (2) operation errors during quantum gate operations~\citep{muralidharan2016optimal}, and (3) quantum memory decoherence, where we focus on the first two errors in this work.
To mitigate these errors, two main approaches are used: heralded entanglement generation (HEG)~\citep{briegel1998quantum}, and quantum error correction (QEC)~\citep{munro2012quantum}, where the HEG requires assistance of two-way classical communication,
while the QEC requires only one-way classical communication.

When both errors are addressed by QEC, it is called a \emph{one-way} quantum network.
When both error sources are addressed by HEG, the network is termed a \emph{two-way} quantum network.
In the  literature~\citep{muralidharan2016optimal}, a two-way quantum network is also referred as a first-generation quantum network, while a one-way quantum network is referred to as a third-generation quantum network. A network that combines QEC and HEG approaches is known as a {second-generation} quantum network.


\subsection{Network Benchmarking} \label{subsec:network-benchmark}
One of the key challenges in building quantum networks is that quantum information is vulnerable to noise sources that the environment, devices, and eavesdroppers could cause. Noise is often modeled as a quantum channel \(\Lambda\) that transforms an input quantum state into another output quantum state~\citep[Chapter 8]{nielsen2001quantum}. We note that the quantum channel is mainly a mathematical model, and there can be different physical realizations of the quantum channel, e.g., one-way or two-way quantum networks.
Several metrics have been introduced to characterize the noise in the quantum channel; we will use \emph{fidelity}~\citep{nielsen2001quantum}, a measure of how well the output quantum state \(\Lambda(\ket\psi\bra\psi)\) approximates the input quantum state \(\ket\psi\bra\psi\). Formally, we define the average fidelity of the channel \(\Lambda\) as \(f \coloneqq \int \Tr[\Lambda(\ket\psi\bra\psi)\ket\psi\bra\psi]  d\psi\) where the integral is taken over all pure quantum states \(\ket \psi\).


To estimate the fidelity of the quantum channel, one can use \emph{network benchmarking}~\citep{helsen2023benchmarking}.
Network benchmarking operates on the principle that a channel twirling process~\citep{magesan2012characterizing}, which involves the random application of Clifford operations~\cite{clifford1998the}, converts any quantum channel into a depolarizing channel that maintains the same level of fidelity. By repeatedly accessing these depolarizing channels, their average fidelity can be assessed, effectively reflecting the fidelity of the original channel. This process utilizes parameters $\mathcal{M}$ and $T_0$, where $\mathcal{M}$, known as the bounce number set, comprises a sequence of integers, and $T_0$ indicates the number of repetitions for each bounce number $m \in \mathcal{M}$. A ``bounce'' occurs when node $S$ applies a random Clifford operation to the state and forwards it to node $D$, which reciprocates before sending it back to $S$.

To calculate the average fidelity of the channel
between nodes $S$ and $D$, this sequence is executed $T_0$ times for each $m \in \mathcal{M}$, following these steps: (i) node $S$ generates an initial state; (ii) nodes $S$ and $D$ bounce the state $m$ times; (iii) node $S$ performs a final operation and measures the state. The mean of the $T_0$ measurement outcomes, represented as $b_m$, often referred to as the survival probability, is determined by the exponential model $b_m = Ap^{2m}$, where $A$ is a constant that compensates for errors in quantum state preparation and measurement, and $p$ is the depolarizing parameter of the twirled channel.
From the depolarizing parameter $p$, one can infer the channel fidelity as $f=(1+{p})/2$~\citep{helsen2023benchmarking}.
Therefore, by applying this exponential model $b_m = Ap^{2m}$ to the collected network benchmarking data,
one can estimate the depolarizing parameter \(p\), denoted as $\hat{p}$, and infer the channel fidelity \(\hat f = (1+\hat{p})/2\).
We denote the network benchmarking over nodes \(S\) and \(D\) (denoted as \(x_{S\leftrightarrow D}\)) under \(T_0\) repetitions and \(\mathcal M\) bounce number set as \(\bench(x_{S\leftrightarrow D}; T_0, \mathcal{M})\).

The benchmarking process requires both $S$ and $D$ to be able to perform the needed quantum operations and measurements.
This requirement can be supported by most nodes (e.g., quantum switches) in the one-way quantum network, but only by source and destination nodes in the two-way quantum network, where most internal nodes are often simple repeaters.
Later, we model the former one with the ability to benchmark each link as \emph{link-level} benchmarking and the latter as \emph{path-level} benchmarking.
Denote the cost of benchmarking one link as a unit of quantum resources.
Then, benchmarking a path consisting of \(x\) links consumes \(x\) units of quantum resources since this benchmarking incurs cost across all links in the path.

\section{Model}\label{sec:model}

\subsection{Network Model Parameters}

Consider one pair of source and destination nodes connected by multiple paths in a quantum network (as in Figure~\ref{fig:qn-illustration}). Let \(K\in\mathbb{N}^+\) denote the total number of quantum paths between these two nodes, forming a set of paths \(\mathcal{K}\coloneqq \{1,2,\dots, K\}\).
Let \(L \in \mathbb{N}^+\) denote the total number of quantum links composing these \(K\) paths and let \(\mathcal{L}\coloneqq \{1,2,\dots, L\}\) be the set of links.
For any path \(k\in\mathcal{K}\), let \(\bm x (k) \coloneqq (x_\ell (k))_{\ell\in\mathcal{L}} \in \{0,1\}^L\) be a binary vector representing whether the link \(\ell\) belongs to path \(k\), where \(x_\ell (k) = 1\) if link \(\ell\) is in path \(k\), and \(x_\ell (k) = 0\) otherwise.
Let \(L(k) \coloneqq \sum_{\ell \in\mathcal{L}} x_\ell (k)\) denote  the number of quantum links in path \(k\), and \(\mathcal{L}(k) \coloneqq \{\ell\in\mathcal{L}: x_\ell (k) = 1\}\) as the set of quantum links in path \(k\).
Without loss of generality, assume the matrix \(\bm X \coloneqq \{\bm x(1), \bm x(2), \dots, \bm x(K)\}\) has a rank of exactly \(L\). If not, we can project the matrix to a lower-dimensional space where the corresponding network graph retains the same best arm.

Each quantum link \(\ell\in \mathcal{L}\) is associated with fidelity parameter \(f_\ell \in (0,1)\) and depolarizing parameter \(p_\ell\in (0,1)\).
We use the boldface letter, e.g., \(\bm p = (p_\ell)_{\ell\in\mathcal{L}}\), to denote a vector with \(L\) entries, each entry corresponding to one of the \(L\) links.
Similarly, each path \(k\in\mathcal{K}\) is associated with fidelity parameter \(f^\pl(k)\) and depolarizing parameter \(p^\pl(k)\).
The depolarizing parameter \(p^\pl(k)\) equals the product of the depolarizing parameter of each quantum link in the path~\citep{vardoyan2023quantum,helsen2023benchmarking}, that is,
\begin{align}
    p^\pl(k) & = \prod_{\ell \in \mathcal{L}(k)} p_\ell.\label{eq:depolarizing-parameter-product-relation}
\end{align}
In this model, the network topology is known a priori, but the link and path parameters, e.g., \(f_\ell, f^{\pl}(k)\), are unknown, and one needs to benchmark (learn) these parameters.

\subsection{Fidelity parameter transformation}

Our goal is to identify the best path, the one with the highest channel fidelity \(f^{\pl}(k)\) in the quantum network, that is,
\({
        k^* = \argmax_{k\in\mathcal{K}} f^{\pl}(k),
    }\)
and without loss of generality, we assume that the best path is unique.
We first elaborate on a monotonic bijection mapping of the fidelity \(f^{\pl}\) as follows,
\begin{align}
    f^{\pl}(k) \overset{(a)}\longleftrightarrow p^{\pl}(k) \overset{(b)}\longleftrightarrow \sum_{\ell \in \mathcal{L}(k)} \log p_\ell \coloneqq F(k),
    \label{eq:fidelity-transformation}
\end{align}
where mapping (a) is due to the relation between the fidelity and the depolarizing parameter of the quantum path of network benchmarking (Section~\ref{subsec:network-benchmark}), i.e., \(f^{\pl}(k) = {(p^{\pl}(k) + 1)}/{2}\),
mapping (b) is by taking $\log$ on~\eqref{eq:depolarizing-parameter-product-relation},
and we define \(F(k)\) as the transformed fidelity since it has a monotonic bijection with the original fidelity \(f^{\pl}(k)\).

For ease of presentation, we slightly abuse the notation of scalar \(\log\) function to represent a entry-wise vector \(\log\) function, such that \(\log \bm p = (\log p_\ell)_{\ell \in \mathcal{L}}\).
Therefore, the transformed fidelity function \(F(k)\) can be expressed as an inner product of the binary vector \(\bm x(k)\) for path \(k\) and the vector \(\log \bm p\), that is, \(F(k) = \bm x^T(k) \log \bm p\).
As \(\argmax_{k\in\mathcal{K}}f^{\pl}(k) = \argmax_{k\in\mathcal{K}}F(k)\), the best path \(k^*\) can also be identified by maximizing \(F(k)\) in~\eqref{eq:fidelity-transformation}.
Hereafter, we estimate the depolarizing parameter \(p_\ell\) of each quantum link in the network and use \(p_\ell\) to calculate fidelity when necessary.

To facilitate later analysis of our algorithms, we define two fidelity gaps.
For link-level, we define the gap as follows,
\[
    \Delta_\ell \coloneqq \begin{cases}
        F(k^*) - \max_{k \in \mathcal{K}: \ell \in \mathcal{L}(k)} F(k)    & \text{if } \ell \notin \mathcal{L}(k^*) \\
        F(k^*) - \max_{k \in \mathcal{K}: \ell \notin \mathcal{L}(k)} F(k) & \text{if } \ell \in \mathcal{L}(k^*).
    \end{cases}
\]
The link gap \(\Delta_\ell\) measures the difference in fidelities of the best global path \(k^*\) and the best local path among the paths that (do not) contain link \(\ell\) when the link (does not) belongs to the optimal path.
For path-level, we define the path gap for any suboptimal path \(k\neq k^*\),
\(
\Delta^\pl(k) \coloneqq F(k^*) - F(k),
\)
which is the difference in fidelities of the best path \(k^*\) and suboptimal path $k\ne k^*$; and for the best path \(k^*\), the gap is \(\Delta^\pl(k^*) \coloneqq \min_{k\neq k^*}\Delta^\pl(k)\).

\subsection{Link-level and Path-level Network Benchmarking}

In Section~\ref{subsec:network-benchmark}, we define the network benchmarking subroutine as \(\bench(x_{S\leftrightarrow D}; T_0, \mathcal{M})\), which estimates the depolarizing parameter \(p_{S\leftrightarrow D}\) of the channel between nodes \(S\) and \(D\), with both nodes needing the ability to perform advanced quantum operations and measurements. For simplicity, we fix the bouncing number set \(\mathcal{M}\) and rounds \(T_0\) and focus on choosing the benchmarking channel \(x\). We simplify the notation as \(\bench(x)\), where \(x\) is the benchmarked channel.

Depending on whether intermediate quantum repeaters can perform the necessary quantum operations, we consider two types of benchmarking applications: link-level and path-level benchmarking. \emph{Link-level benchmarking} applies \(\bench(x)\) to a quantum link, i.e., \(x=\ell\) for any link \(\ell \in \mathcal{L}\), to estimate the link depolarizing parameter \(p_\ell\). \emph{Path-level benchmarking} (a.k.a., end-to-end) applies \(\bench(x)\) to a quantum path, i.e., \(x=\mathcal{L}(k)\) for any path \(k\) in the network, to estimate the path depolarizing parameter \(p^\pl(k)\). As discussed in Section~\ref{subsec:network-benchmark}, path-level benchmarking suits quantum networks with limited quantum processing capabilities, while link-level benchmarking is better for networks with advanced quantum processing capabilities. One link-level query incurs one unit of quantum resource cost, whereas one path-level query on path \(k\) incurs \(L(k)\) units of quantum resource cost since the path-level query relies on \(L(k)\) established links in the path. We then state results for network benchmarking.


\begin{lemma}[{Adapted from~\citet[Lemma 1]{liu2024link}}]\label{lma:benchmarking-concentration}
    For any quantum link \(\ell \in \mathcal{L}\), given the average \(\hat p_\ell\) of \(N\in \mathbb{N}^+\) samples of \(\bench(\ell; \mathcal{M}, T_0)\), and parameter \({\delta\in (0,1)}\), we have
    \begin{equation}
        \label{eq:bench-concentration}
        \P(
        \abs{\hat p_\ell - p_\ell}
        \le \sqrt{{C\log \left(\delta^{-1}\right)}/{N}}
        )
        \ge 1- \delta,
    \end{equation}
    where the constant \(C\) depends on the bouncing number set \(\mathcal{M}\), the rounds \(T_0\), and other network parameters.
\end{lemma}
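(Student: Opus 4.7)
The plan is to follow the standard network-benchmarking analysis template and reduce the claim to (i) a Hoeffding-type concentration for each survival probability $\hat{b}_m$ and (ii) a Lipschitz-type stability analysis for the exponential fit that extracts $\hat{p}_\ell$ from $(\hat{b}_m)_{m\in\mathcal{M}}$. Since the statement is lifted from a prior work, I would mirror the structure there but keep the argument self-contained by spelling out each step in the present notation.

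First, I would split the $N$ samples of $\bench(\ell;\mathcal{M},T_0)$ across the bounce numbers $m\in\mathcal{M}$; each sample at bounce number $m$ yields a $\{0,1\}$-valued (or bounded) measurement outcome with mean $b_m = A p_\ell^{2m}$. Applying Hoeffding's inequality to the empirical mean $\hat{b}_m$ for each fixed $m$, and then taking a union bound over the finite set $\mathcal{M}$, I obtain
\begin{equation*}
    \P\!\left(\max_{m\in\mathcal{M}} \lvert \hat{b}_m - b_m\rvert \le \sqrt{\frac{c_1 \log(\lvert\mathcal{M}\rvert/\delta)}{N/\lvert\mathcal{M}\rvert}}\right)\ge 1-\delta,
\end{equation*}
for some universal constant $c_1$ determined by the range of the measurement outcomes and $T_0$. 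This gives the uniform concentration of the survival-probability vector that we feed into the fit.

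Next, I would treat $\hat{p}_\ell$ as the output of the least-squares (or log-linear) fit of the model $b_m = A p^{2m}$ to the observed $\hat{b}_m$. Taking logs, this becomes a linear regression of $\log \hat{b}_m$ against $2m$ with slope $\log p_\ell$; since $b_m$ is bounded away from $0$ whenever $p_\ell$ is bounded away from $0$ (using $p_\ell \in (0,1)$ and the bounded bounce set), the log is Lipschitz on the relevant interval, and the closed-form OLS solution is a Lipschitz function of $(\hat{b}_m)_{m\in\mathcal{M}}$ whose Lipschitz constant depends only on $\mathcal{M}$ and the lower envelope on $b_m$. Propagating the uniform bound from the previous step through this Lipschitz map yields
\begin{equation*}
    \lvert \hat{p}_\ell - p_\ell \rvert \le L_{\mathcal{M},T_0}\cdot \sqrt{\tfrac{c_1 \lvert\mathcal{M}\rvert \log(\lvert\mathcal{M}\rvert/\delta)}{N}},
\end{equation*}
on the high-probability event. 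Absorbing $L_{\mathcal{M},T_0}^2 \cdot c_1\lvert\mathcal{M}\rvert$ and the $\log\lvert\mathcal{M}\rvert$ slack into a single constant $C$ (which depends on $\mathcal{M}$, $T_0$, and network parameters controlling the lower bound on $p_\ell$), I recover exactly the bound $\sqrt{C\log(\delta^{-1})/N}$ claimed in the lemma.

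The main obstacle will be step (ii): controlling the stability of the exponential/log-linear fit, because a naive Lipschitz bound blows up as $p_\ell$ or $\hat{b}_m$ approaches $0$. I would handle this either by (a) restricting $p_\ell$ to a compact subinterval of $(0,1)$, which is implicit in the network parameters folded into $C$, or (b) performing the error propagation on an event where $\hat{b}_m$ stays close enough to $b_m$ that the log remains well-defined and bounded, absorbing the complementary event into $\delta$ via an initial union bound. Everything else—Hoeffding on bounded outcomes, union bound over the finite set $\mathcal{M}$, and Lipschitz propagation through OLS—is standard bookkeeping, and I would cite \citet{liu2024link} for the precise form of the constant $C$ rather than re-deriving its exact dependence.
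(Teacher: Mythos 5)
The paper does not actually prove this lemma: it is imported (``adapted from'') \citet[Lemma 1]{liu2024link}, and the appendix only proves the results that build on it, so there is no in-paper argument to compare yours against. Your reconstruction follows the standard template that such a benchmarking-concentration result rests on --- Hoeffding plus a union bound over the bounce set for the survival probabilities $\hat b_m$, then Lipschitz propagation through the log-linear fit --- and it is sound provided the two caveats you already flag are enforced: $p_\ell$ (hence $b_m$) bounded away from zero so that $\log$ and the OLS slope map are Lipschitz on the relevant region, with the complementary event where some $\hat b_m$ escapes that region folded into $\delta$. Two bookkeeping points worth making explicit. First, absorbing $\log(\abs{\mathcal M}/\delta)$ into $C\log(\delta^{-1})$ requires $\delta$ bounded away from $1$ (otherwise $\log(\delta^{-1})\to 0$ while $\log\abs{\mathcal M}$ does not); this is harmless for how the lemma is used (it is always invoked with $\delta \gets \delta/(2Lt^3)$), but the constant $C$ as stated cannot be uniform over all $\delta\in(0,1)$ under your derivation. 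Second, the lemma averages $N$ outputs of the \emph{whole} \bench{} subroutine, each of which is already a nonlinear fit, rather than pooling raw outcomes across bounce numbers as you do; the cleaner route matching the statement is to show a single \bench{} output is sub-Gaussian around $p_\ell$ (via exactly your steps (i)--(ii) applied to one run with $T_0$ repetitions) and then apply Hoeffding once to the $N$ i.i.d.\ outputs. Your pooled variant yields the same $1/\sqrt{N}$ rate with a slightly different constant, so this is a presentational rather than substantive difference.
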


Lemma~\ref{lma:benchmarking-concentration} shows that benchmarking can estimate the depolarizing parameter of a link with high confidence, and~\eqref{eq:bench-concentration} depicts the concentration rate of the estimation. A similar concentration result holds for path-level benchmarking.

\subsection{Online Learning Problem Formulation}

We consider the problem of learning the best path in the quantum network: a sequential decision process, where, at each time \(t\), the learner selects a link \(\ell_t\) (or path \(k_t\)) to benchmark and receives estimates of link (or path) depolarizing parameters \(\hat p_{\ell_t}\) (or \(\hat p^{\pl}(k_t)\)) from the network benchmarking subroutine.
From the benchmarking feedback, the learner updates the estimate of the depolarizing parameter of each link (or path) and then decides which to benchmark in the next slot. We summarize the problem in Procedure~\ref{proc:learning-best-path}.

Given a confidence parameter \(\delta\in (0,1)\), we aim to identify the best path \(k^*\) with confidence \(1-\delta\), i.e., \(\mathbb{P}(\hat k_{\text{output}} = k^*) \ge 1-\delta\), with as few costs of
quantum resources as possible. We evaluate the learning algorithm's performance by the total quantum resources consumed,
which is called (quantum) resource complexity, denoted by \(Q\).
Let \(T\) denote the last round of the algorithm (stopping round). Then, \(\complexity = T\) under link-level benchmarking and \(\complexity = \sum_{t=1}^T L(k_t)\) under path-level benchmarking.

\floatname{algorithm}{Procedure}
\begin{algorithm}[tb]
    \caption{\bpi: Learning \underline{Be}st \underline{Qu}antum \underline{P}ath}\label{proc:learning-best-path}
    \begin{algorithmic}[1]
        \Input Path set \(\mathcal{K}\), link set \(\mathcal{L}\),
        confidence parameter \(\delta\)
        \Repeat
        \State Select a link \(\ell_t\) / path \(k_t\) to benchmark
        \State Observe link-level feedback \(X_{\ell_t,t}\) from subroutine
        \(\bench(\ell_t)\) / path-level \(Y_t(k_t)\)  from \(\bench(\mathcal{L}(k_t))\)

        \Until{Identify the best path \(k^*\) with confidence \(1-\delta\)}
        \Output Best path \(k^*\)
    \end{algorithmic}
\end{algorithm}
\floatname{algorithm}{Algorithm}

\section{A Link-Level Algorithm} \label{sec:link}

This section presents the \bpilink algorithm to identify the best path in a quantum network with link-level feedback and the algorithm's resource complexity analysis.

\begin{algorithm}[tp]
    \caption{\bpilink: {\small Link-Level Best Path Identification}}
    \label{alg:link-level}
    \begin{algorithmic}[1]
        \Input Path set \(\mathcal{K}\), link set \(\mathcal{L}\),
        confidence parameter \(\delta\), confidence radius function \(\rad_t(N)\),  subroutine \(\oracle\)

        \State \textbf{initialization:}
        set time \(t\gets L\);
        benchmark each link \(\ell\in\mathcal{L}\) once to initialize its depolarizing parameter estimate \(\hat{p}_{\ell, t}\), set corresponding counter \(N_{\ell, t}\gets 1\)

        \While{\texttt{true}}

        \State \(\hat k_t \gets \oracle(\hat{\bm{p}}_{t}, \mathcal{K})\) \label{line:empirical-best-path}
        \RightComment{input neutral link estimates}

        \State \(\tilde{p}_{\ell, t} \gets \begin{cases}
            \hat{p}_{\ell, t} - \rad_t(N_{\ell, t}) & \text{if } \ell \in \mathcal{L}(\hat k_t) \\
            \hat{p}_{\ell, t} + \rad_t(N_{\ell, t}) & \text{otherwise}
        \end{cases}\) \label{line:confidence-estimate}
        \RightComment{pessimistic/optimistic link estimates}

        \State \(\tilde{k}_{t} \gets \oracle(\tilde{\bm{p}}_{t}, \mathcal{K})\)
        \Comment{exploration}\label{line:optimistic-best-path}

        \If {\(\mathcal{L}(\hat k_t) = \mathcal{L}(\tilde k_t)\)} \label{line:break-condition}
        \State \textbf{break} \Comment{identified the best path}
        \EndIf

        \State \(\ell_t \gets \argmax_{\ell \in \mathcal{L}(\hat k_t) \bigtriangleup \mathcal{L}(\tilde k_t)} \rad_t(N_{\ell, t}))\) \footnotemark
        \RightComment{break ties arbitrarily}\label{line:link-selection}

        \State \(X_{\ell_t, t} \gets \bench(\ell_t)\) \label{line:link-benchmark}

        \State \(N_{\ell, t+ 1} \gets
        \begin{cases}
            N_{\ell, t} + 1 & \text{if } \ell = \ell_t \\
            N_{\ell, t}     & \text{otherwise}
        \end{cases}\) \label{line:counter-update}

        \State \(\hat{p}_{\ell, t+1} \gets
        \begin{cases}
            (X_{\ell, t} + N_{\ell, t} \hat{p}_{\ell, t}) / N_{\ell, t+1} & \text{if } \ell = \ell_t \\
            \hat{p}_{\ell, t}                                             & \text{otherwise}
        \end{cases}
        \)\label{line:link-estimate-update}

        \State \(t\gets t+1\) \label{line:time-update}

        \EndWhile

        \Output Path \(\hat k_t\)
    \end{algorithmic}
\end{algorithm}
\footnotetext{For two sets \(\mathcal A\) and \(\mathcal B\), the operator \(\mathcal{A} \bigtriangleup \mathcal{B}\) is the symmetric difference of the two sets defined as follows, \(\mathcal{A} \bigtriangleup \mathcal{B} \coloneqq (\mathcal{A}\setminus \mathcal{B}) \cup (\mathcal{B} \setminus \mathcal{A})\), where \(\mathcal{A}\setminus \mathcal{B}\coloneqq \mathcal{A}\cap \mathcal{B}^C\) is the set minus operation.}

\subsection{Algorithm Design}


\subsubsection{\oracle{} Subroutine}
Given the link depolarizing parameters \({\bm p = (p_\ell)_{\ell\in\mathcal{L}}}\) and the path set \(\mathcal{S}\subseteq\mathcal{K}\), \oracle{} outputs the best path among set \(\mathcal S\) that maximizes fidelity function \(F(k)\), that is, \(\oracle: (\bm p, \mathcal{S}) \to \argmax_{k\in \mathcal{S}} F(k)\).
For example, when input path set \(\mathcal{S}=\mathcal{K}\), one can implement \oracle{} by applying Dijkstra's algorithm~\citep{dijkstra1959note} to find the shortest path in the quantum network graph with edge lengths (or link weights) \(-\log (p_{\ell})\).
To estimate \(\{\log (p_{\ell})\}_{\ell\in\mathcal{L}}\) and solve the \oracle{} function, we need to transfer the estimates of depolarizing parameters \(\hat{p}_{\ell}\) to the estimate \({\log \hat p_\ell}\),
which introduces additional estimation errors to our algorithm. Later in~\ref{subsec:link-analysis}, we analyze the impact of these errors on the algorithm's resource complexity performance.




\subsubsection{\bpilink Algorithm}
\bpilink (Algorithm~\ref{alg:link-level}) takes the path set \(\mathcal K\), link set \(\mathcal L\), confidence parameter \(\delta\), radius function \(\rad_t(\cdot)\) (defined later in Theorem~\ref{thm:path-complexity}), and subroutine \oracle{} as input
and returns the best path \(k^*\) as output.
During each time slot \(t\), the algorithm first calls \oracle{} to return the best path \(\hat k_t\) based on the current link estimates \(\hat{\bm{p}}_t\).
Then, the algorithm constructs confidence estimates \(\tilde p_{\ell,t}\) for all link depolarizing parameters: use optimistic estimates \(\tilde{p}_{\ell,t} = \hat{p}_{\ell,t} + \rad_t(N_{\ell, t})\) for links \emph{not} in the empirical best path \(\hat k_t\), and pessimistic estimates \(\tilde{p}_{\ell,t} = \hat{p}_{\ell,t} - \rad_t(N_{\ell, t})\) for links in the empirical best path \(\hat k_t\) (Line~\ref{line:confidence-estimate}), where the \(\rad_t(N_{\ell, t})\) is the confidence radius of the link parameter estimate \(\hat p_{\ell, t}\) at time \(t\).
Then, based on confidence estimates \(\tilde{\bm p}_t\), \bpilink calls \oracle{} to calculate the second best empirical path \(\tilde k_t\)  (Line~\ref{line:optimistic-best-path}).
All links favored in the empirical best path \(\mathcal{L}(\hat k_t)\) are pessimistically estimated, while all other links in disfavor (not in \(\mathcal{L}(\hat k_t)\)) are optimistically estimated.
Therefore, if another path is potentially better than the first empirically best path \(\hat k_t\), it would be identified as the second empirical best path \(\tilde k_t\).

The algorithm then checks if the empirical best path \(\hat k_t\) and the exploratory best path \(\tilde k_t\) are the same.
If so, no other arm is potentially better,
and the algorithm terminates and outputs the empirical best path \(\hat k_t\) (Line~\ref{line:break-condition}), which is the best path with high probability.
Otherwise, the algorithm continues as follows. 
Among the links in the symmetric difference \(\mathcal{L}(\hat k_t)\bigtriangleup \mathcal{L}(\tilde k_t)\), i.e., either exclusively in path \(\hat k_t\) or exclusively in path \(\tilde k_t\), the algorithm selects the link with the largest confidence radius \(\rad_t(N_{\ell, t})\) (i.e., uncertainty) to benchmark (Line~\ref{line:link-selection}).
After receiving a new observation from the chosen link \(\ell_t\) (Line~\ref{line:link-benchmark}), the algorithm updates the depolarizing parameter estimate \(\hat p_{\ell,t}\) and the counter \(N_{\ell,t}\) for the link \(\ell_t\) (Lines~\ref{line:counter-update} and \ref{line:link-estimate-update}).

\subsection{Quantum Resource Complexity Analysis}\label{subsec:link-analysis}

This subsection presents the resource complexity of Algorithm~\ref{alg:link-level} for identifying the best path with link-level feedback.
The detailed proofs of all theoretical results of this paper will be provided in an extended version due to the space limit.

\begin{theorem}[Resource complexity of \bpilink]\label{thm:link-complexity}
    Given confidence parameter \(\delta > 0\) and set the radius function \(\rad_t(N_{\ell, t}) = \sqrt{{C\log \left( \frac{2Lt^3}{\delta} \right)}/{N_{\ell,t}}}\) with the same constant \(C\) in Lemma~\ref{lma:benchmarking-concentration},
    Algorithm~\ref{alg:link-level} identifies the best path with probability at least \(1 - \delta\) before
    \begin{equation}
        \label{eq:complexity-link-level}
        \complexity^{(\text{\emph{link}})}=O\left( L_{\text{\normalfont max}}^2 \sum_{\ell \in \mathcal{L}} \frac{1}{\Delta_\ell^2} \log \left( \frac{L}{\delta}\sum_{\ell \in \mathcal{L}} \frac{1}{\Delta_\ell^2} \right) \right),
    \end{equation}
    where \(\Lmax \coloneqq \max_{k\in\mathcal{K}} L(k)\) is the length of the longest path.
\end{theorem}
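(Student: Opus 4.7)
The plan is to follow a LUCB-style template adapted to the combinatorial path structure. First, define a good event on which every per-link confidence interval holds simultaneously: with $\rad_t(N) = \sqrt{C\log(2Lt^3/\delta)/N}$, Lemma~\ref{lma:benchmarking-concentration} combined with a union bound over $\ell \in \mathcal L$ and $t \geq 1$ (using $\sum_{t\geq 1} t^{-3} \leq 2$) gives $\P(\mathcal E) \geq 1 - \delta$, where $\mathcal E = \bigcap_{\ell,t}\{|\hat p_{\ell,t} - p_\ell| \leq \rad_t(N_{\ell,t})\}$. Because the \oracle{} reasons in the $\log$-domain, I would immediately transfer this confidence to $\log p_\ell$ via local Lipschitzness of $\log$ on $[p_{\min},1]$, folding the Lipschitz constant into $C$. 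Writing $\rad_\ell$ as shorthand for $\rad_t(N_{\ell,t})$, this yields $|\tilde F(k) - F(k)| \leq 2 \sum_{\ell \in \mathcal L(k)} \rad_\ell$ on $\mathcal E$ for every path $k$, where $\tilde F(k) := \bm x^T(k)\log\tilde{\bm p}_t$.

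Next, I would establish correctness at termination. The key identity is that the mixed pessimistic/optimistic construction of $\tilde p$ is tilted against $\hat k_t$: on $\mathcal E$, a cancellation over $\mathcal L(k) \cap \mathcal L(\hat k_t)$ yields
\begin{align*}
\tilde F(k) - \tilde F(\hat k_t) - \bigl[F(k) - F(\hat k_t)\bigr]
&= \sum_{\ell \in \mathcal L(k) \setminus \mathcal L(\hat k_t)} \bigl[\log(\hat p_\ell + \rad_\ell) - \log p_\ell\bigr] \\
&\quad + \sum_{\ell \in \mathcal L(\hat k_t) \setminus \mathcal L(k)} \bigl[\log p_\ell - \log(\hat p_\ell - \rad_\ell)\bigr] \geq 0.
\end{align*}
Taking $k = k^*$: if $\hat k_t \neq k^*$ then $\tilde F(k^*) > \tilde F(\hat k_t)$, so $\tilde k_t \neq \hat k_t$ and the stopping test on Line~\ref{line:break-condition} fails. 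Contrapositively, on $\mathcal E$ the algorithm outputs $\hat k_t = k^*$ whenever it stops.

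For the sample complexity, I would bound how many times each link can be pulled. Whenever $\ell_t$ is selected, it lies in $\mathcal L(\hat k_t) \bigtriangleup \mathcal L(\tilde k_t)$, a set of size at most $2\Lmax$, and carries the largest confidence radius in that set. Combining the empirical optimality $\hat F(\hat k_t) \geq \hat F(k^*)$ with the \oracle{}-optimality $\tilde F(\tilde k_t) \geq \tilde F(k^*)$ and the confidence bounds from Step~1, a short case analysis on whether $\hat k_t = k^*$ or $\hat k_t \neq k^*$ (in each case mapping the algorithm's symmetric difference onto the one used to define $\Delta_\ell$) produces $\Delta_{\ell_t} \leq \alpha \Lmax \rad_t(N_{\ell_t,t})$ for a universal constant $\alpha$. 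Inverting with the chosen radius yields $N_{\ell,t} = O\!\left(\Lmaxsquare \log(Lt/\delta)/\Delta_\ell^2\right)$ for any $\ell$ pulled at time $t$; summing over $\ell \in \mathcal L$ and applying a standard self-bounding step, which replaces $\log(Lt/\delta)$ with $\log((L/\delta)\sum_\ell 1/\Delta_\ell^2)$, gives \eqref{eq:complexity-link-level} after noting $\complexity^{(\text{link})} = T$.

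The main obstacle is Step~3. The estimates $\tilde p$ are defined relative to $\hat k_t$, whereas $\Delta_\ell$ is defined relative to $k^*$, so the two relevant symmetric differences need not coincide when $\hat k_t \neq k^*$; bridging them requires invoking $\hat F(\hat k_t) \geq \hat F(k^*)$ to transfer the true path gap onto the symmetric difference that the algorithm actually interrogates. The $\Lmax$ factor in $\Delta_{\ell_t} \leq \alpha\Lmax\rad_t(\cdot)$, and hence the $\Lmaxsquare$ appearing in~\eqref{eq:complexity-link-level}, is the unavoidable consequence of summing $\log$-fidelity errors along a path of up to $\Lmax$ links.
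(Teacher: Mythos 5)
Your proposal is correct and follows essentially the same route as the paper's proof: a union-bounded good event over links and rounds, correctness via the observation that the pessimistic/optimistic tilt makes $\tilde F(k)-\tilde F(\hat k_t)$ dominate $F(k)-F(\hat k_t)$ so the stopping test can only pass at $k^*$, and a per-link pull-count bound obtained by contradiction from the max-radius selection rule over the symmetric difference (of size at most $2\Lmax$), followed by the standard self-bounding substitution. The bridging step you flag as the main obstacle is exactly how the paper's case analysis works (using the $\hat{\,\cdot\,}$- and $\tilde{\,\cdot\,}$-optimality of $\hat k_t$ and $\tilde k_t$ to relate the algorithm's symmetric difference to the gap $\Delta_{\ell}$), so no substantive difference remains.
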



The resource complexity of Algorithm~\ref{alg:link-level} has a quadratic dependence on the length of the longest path \(\Lmax\).
This \(\Lmax\) dependence is due to the estimation error introduced by the transformation from the link depolarizing parameter \(\hat p_{\ell}\) to the path depolarizing parameter \(\hat p^{\pl} (k)\) in the \oracle{} function.
In practice, the depolarizing parameter \(p_\ell\in (0,1)\) often has a tighter range, e.g., between \((p_{\text{min}}, p_{\text{max}})\).
Because less than \(p_{\text{min}}\) indicates that the link is too noisy to be useful for transmitting quantum information, and greater than \(p_{\text{max}}\) requires advanced but unaffordable physical implementations.
With this constraint, one can show that any path whose length is greater than \(L_{\text{min}} \log_{1/p_{\text{max}}} (1/p_{\text{min}})\) cannot be the best path, where \(L_{\text{min}}\) is the length of the shortest path, and hence, we can remove these paths and say that \(L_{\text{max}} \le L_{\text{min}} \log_{1/p_{\text{max}}} (1/p_{\text{min}})\), which is often much smaller than \(L\).
Therefore, the other term \(\sum_{\ell \in \mathcal{L}}(1/\Delta_\ell^{2})\), linear in the number of links \(L\), is the dominant term in the resource complexity of Algorithm~\ref{alg:link-level}, which is avoidable according to the best arm identification literature~\citep{chen2014combinatorial}.



\section{A Path-Level Algorithm}\label{sec:path}

In this section, we present the \bpipath algorithm to identify the best path in a quantum network with path-level feedback and determine its resource complexity.

\subsection{Algorithm Design}\label{subsec:path-algorithm}
\begin{algorithm}[tb]
    \caption{\(\est(\mathcal{S}, N)\): Estimate Link Depolarizing Parameter from Path-level Benchmarking}\label{alg:link-estimate}
    \begin{algorithmic}[1]

        \Input A set of path indices \(\mathcal{S}\) and a number of samples \(N\)

        \State \(\bm \lambda\upbra{\mathcal{S}} \gets \argmax_{\bm\lambda} \mathbb{E}_{k \sim \mathcal{D}(\bm\lambda, \mathcal{S})} [\bm x(k) \bm x^T(k)]\)
        \label{line:expermental-design}
        \For{\(n=1,\dots, N\)}\label{line:sample-loop}
        \State Sample path \(k_n\) from \(\mathcal{S}\) with probability \(\mathcal{D}(\bm \lambda\upbra{\mathcal{S}}, \mathcal{S})\)
        \State \(Y_{n} \gets \bench(\mathcal{L}(k_n))\)
        \EndFor\label{line:sample-loop-end}
        \State \(\bm A \gets N \sum_{k \in \mathcal{S}} \lambda\upbra{\mathcal{S}}(k) \bm x(k) \bm x^T(k)\)
        \label{line:linear-regression-start}
        \State \(\bm b \gets \sum_{n=1}^N \log(Y_n) \bm x(k_n)\)
        \vspace{1.5pt}
        \Output \({\log \hat{\bm p}} \gets \bm A^{-1}\bm{b}\) \label{line:linear-regression}
        \label{line:return-link-estimate}
    \end{algorithmic}
\end{algorithm}


Under the path-level feedback, we benchmark each path \(k \in \mathcal{K}\) and estimate depolarizing parameter \(p^\pl(k)\) for this path. Although one can regard each path as an individual quantum ``link'' and apply the LinkSelFiE~\cite{liu2024link} algorithm to identify the best one, the algorithm requires a large number of resources that has a linear dependence on the number of paths \(K\), as large as \(O\left( L^{L_{\text{max}}} \right)\) in the worst case. That is, individually learning each single path incurs a resource complexity that grows exponentially in the length of the longest path \(\Lmax\), which is impractical for large quantum networks. To avoid this exponential dependence, one should take advantage of overlaps among paths, i.e., that paths may share links. Thus, we devise a subroutine, called~\est{}, to estimate the depolarizing parameter of each link from the path-level feedback. In this subsection, we start by presenting the details of~\est{}, and then present the main algorithm to identify the best path using path-level feedback.



\subsubsection{\est{} Subroutine}
The \est{} subroutine has two inputs: Path set \(\mathcal{S}\) and the number of samples \(N\).
Noting the multiplicative relation between link-level and path-level parameters, \(p^\pl(k) = \prod_{\ell \in \mathcal{L}(k)} p_\ell\), we can estimate the depolarizing parameter of each link by solving a linear regression, composed of equations obtained from applying the \(\log\) function to both sides of the equation, i.e., \(\log p^\pl(k) = \sum_{\ell\in\mathcal{L}(k)} \log p_\ell\).

\est{} starts by collecting path-level samples, following the G-optimal design principle~\citep{tao2018best}, which aims to maximize estimation accuracy. That is, sampling paths according to a distribution that maximizes the variance matrix \(\mathbb{E}_{k \sim \mathcal{D}(\bm\lambda, \mathcal{S})} [\bm x(k) \bm x^T(k)]\) where \(\mathcal{D}(\bm\lambda, \mathcal{S})\) is a multinomial distribution over paths in \(\mathcal{S}\) with discrete probabilities in vector \(\bm\lambda\in [0,1]^{\abs{\mathcal{S}}}\) (Line~\ref{line:expermental-design}).

Using the optimal design distribution \(\bm \lambda\upbra{\mathcal{S}}\), \est{} randomly samples \(N\) paths from the path set \(\mathcal{S}\) and obtains the path-level feedback \(Y_n\) for each sampled path \(k_n\) from the path-level \bench~(Lines~\ref{line:sample-loop}--\ref{line:sample-loop-end}).
These path-level observations are then used to set up a system of linear equations, one for every path \(k\in\mathcal{S}\), in \(\est\) (Lines~\ref{line:linear-regression-start}--Output), the solutions of which are estimates of the logarithmic depolarizing parameters $\{\log p_\ell\}_{\ell \in \mathcal L}$ (Lines~\ref{line:linear-regression-start}--Output).
The procedure is detailed in Algorithm~\ref{alg:link-estimate}. The following lemma depicts the final estimate's accuracy of \({\log \hat p_\ell}\).

\begin{lemma}[Accuracy of \est{}]
    \label{lem:link-estimate}
    Given path set \(\mathcal{K}\), link set \(\mathcal{L}\), a confidence parameter \(\delta\), and an error parameter \(\epsilon\),
    if {\est's} input sampling times
    \(N \ge \frac{C_0(4L + (6+\epsilon)L^2)}{\epsilon^2}\log \frac{5L}{\delta}\) where \(C_0\) is a constant depending on the \(C\) in Lemma~\ref{lma:benchmarking-concentration},
    then the \(\est(\mathcal{K}, N)\) algorithm (Algorithm~\ref{alg:link-estimate}) outputs \(\bm p_t\), and,  with probability at least \(1-\delta\), the estimates fulfill \[
        \mathbb{P}\left( \abs{\bm x^T(k) \left(\log \bm p - \log \hat{\bm p}_t\right)} \le \epsilon, \forall k\in\mathcal{K} \right) \ge 1 - \delta.
    \]
\end{lemma}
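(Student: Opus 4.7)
My plan is to decompose the estimator error into a measurement-noise term (from each \bench call) and a sampling-noise term (from drawing paths $k_n\sim\mathcal{D}(\bm\lambda\upbra{\mathcal K},\mathcal K)$), and then to control each projection $\bm x^T(k)(\log\hat{\bm p}-\log\bm p)$ using the G-optimal design guarantee together with standard sub-Gaussian and Bernstein concentration. Concretely, since $\log\hat{\bm p}=\bm A^{-1}\bm b$ and $\bm x^T(k_n)\log\bm p=\log p^{\pl}(k_n)$, I would write
\begin{align*}
\bm A\bigl(\log\hat{\bm p}-\log\bm p\bigr)
&= \sum_{n=1}^N \eta_n\, \bm x(k_n) \\
&\quad + \Bigl(\sum_{n=1}^N \log p^{\pl}(k_n)\bm x(k_n) - N\sum_{k\in\mathcal K}\lambda\upbra{\mathcal K}(k)\log p^{\pl}(k)\bm x(k)\Bigr),
\end{align*}
where $\eta_n\coloneqq\log Y_n-\log p^{\pl}(k_n)$. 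The first sum is the measurement noise from each \bench call; the second is a centred sum over the i.i.d.\ path draws. Both terms have zero conditional mean and are amenable to martingale-style concentration.

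The pivotal structural fact I would invoke is the Kiefer--Wolfowitz equivalence for G-optimal design: the optimiser $\bm\lambda\upbra{\mathcal K}$ from Line~\ref{line:expermental-design} satisfies $\max_{k\in\mathcal K}\bm x^T(k)\bm V^{-1}\bm x(k)\le L$ with $\bm V=\sum_k\lambda\upbra{\mathcal K}(k)\bm x(k)\bm x^T(k)$. Since $\bm A=N\bm V$, this yields the uniform quadratic-form bound $\bm x^T(k)\bm A^{-1}\bm x(k)\le L/N$ for every $k$, which I would substitute in whenever concentrating the projection $\bm x^T(k)\bm A^{-1}(\cdot)$. For the measurement noise, Lemma~\ref{lma:benchmarking-concentration} applied to the path \bench together with a first-order expansion of $\log$ around $p^{\pl}(k_n)$ shows that each $\eta_n$ is sub-Gaussian (with variance proxy controlled by the minimum link depolarising parameter); applying Hoeffding's inequality conditional on $\{k_n\}$ to $\bm x^T(k)\bm A^{-1}\sum_n\eta_n\bm x(k_n)$ then delivers a deviation of order $\sqrt{L\log(1/\delta)/N}$, which is where the $4L$ summand in the sample complexity originates. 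For the sampling noise, each centred summand has $\ell_2$-norm $O(\sqrt L)$ and covariance dominated by $\bm V$, so Bernstein's inequality applied to the scalar $\bm x^T(k)\bm A^{-1}(\cdot)$ contributes a main $\sqrt{L^2\log(1/\delta)/N}$ term plus a lower-order Bernstein correction $L^2\log(1/\delta)/N$, together accounting for the $(6+\epsilon)L^2$ summand.

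Finally, I would make the bound uniform in $k$ via a union bound, exploiting the fact that $\{\bm x(k):k\in\mathcal K\}\subset\{0,1\}^L$ together with a Carath\'eodory-type support argument (the G-optimal design concentrates on $O(L^2)$ paths) to replace $\log K$ by a quantity logarithmic in $L$; this is what turns the union-bound factor into $\log(5L/\delta)$. Setting the combined tail equal to $\epsilon$ and inverting gives the claimed $N\ge C_0(4L+(6+\epsilon)L^2)\epsilon^{-2}\log(5L/\delta)$. The main obstacle I expect is the sampling-noise step: because $\bm x(k_n)$ plays a dual role as both the ``feature'' and the ``coefficient'' in each summand, one must carefully preserve the G-optimal design structure when invoking Bernstein so as not to lose a factor of $L$, and one must verify that the variance proxy is truly controlled by $\bm V$ rather than by a cruder bound. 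Sharpening the union-bound factor from $\log K$—which can be as large as $L^{\Lmax}$—down to $\log(5L/\delta)$ is the other non-routine step.
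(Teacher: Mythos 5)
Your proposal is correct in outline, but it takes a substantially more self-contained route than the paper does. The paper's own proof is essentially two steps: (i) observe that the path-level observations \(\log Y_n\) are bounded (assuming \(p^\pl(k)\in(\epsilon,1)\)) and hence sub-Gaussian, and (ii) invoke Corollary~16 of \citet{tao2018best} verbatim, which already supplies the G-optimal-design estimator, the two-noise-source analysis, and the exact sample-complexity expression \(N \ge \frac{c_0(4L+(6+\epsilon)L^2)}{\epsilon^2}\log\frac{5L}{\delta}\). What you have done is reconstruct the internals of that cited corollary: your decomposition of \(\bm A(\log\hat{\bm p}-\log\bm p)\) into a \bench-measurement-noise term and a centred design-sampling term, the Kiefer--Wolfowitz bound \(\bm x^T(k)\bm A^{-1}\bm x(k)\le L/N\), and the attribution of the \(4L\) and \((6+\epsilon)L^2\) summands to the two noise sources respectively, all match the structure of the result the paper imports. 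Your version buys transparency about where each term in the complexity comes from and makes explicit the assumptions the paper leaves implicit (that \(\eta_n=\log Y_n-\log p^\pl(k_n)\) is conditionally centred, which does not literally follow from Lemma~\ref{lma:benchmarking-concentration} since \(\log\) of a concentrated estimator need not be unbiased --- a leap both you and the paper make); the paper's version buys brevity. One detail worth correcting: the reduction of the union-bound factor from \(\log K\) to \(\log(5L/\delta)\) is not really delivered by the Carath\'eodory bound on the design's support (that only controls which paths get \emph{sampled}); the final guarantee must hold for all \(K\) paths, and the uniformity comes instead from controlling the error vector in the \(\bm V\)-norm (a single \(L\)-dimensional event, costing \(\log(L/\delta)\)) and then applying \(\abs{\bm x^T(k)(\log\hat{\bm p}-\log\bm p)}\le \norm{\bm x(k)}_{\bm V^{-1}}\norm{\log\hat{\bm p}-\log\bm p}_{\bm V}\le\sqrt{L}\,\norm{\log\hat{\bm p}-\log\bm p}_{\bm V}\) uniformly over \(k\).
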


Lemma~\ref{lem:link-estimate} states that when the input sample size \(N\) is greater than the given threshold, \est{} accurately estimates the depolarizing parameters with a high probability such that the estimated fidelity \(\hat{F}(k) = \bm{x}^T(k) \log \hat{\bm{p}}_t\) is close to the true transformed fidelity \({F(k) = \bm{x}^T(k) \log \bm{p}}\) for all paths \(k \in \mathcal{K}\).


\subsubsection{\bpipath Algorithm}
As presented in Algorithm~\ref{alg:path-level}, \bpipath maintains a candidate arm set \(\mathcal{S}\), initialized as the full path set \(\mathcal{K}\), and iteratively prunes the candidate path set until only the best path remains. Specifically, the algorithm iterates over two nested loops. During each outer loop iteration indexed by subscript \(h\) (Line~\ref{line:half-iteration}), the algorithm halves the candidate path set.
Inside each outer iteration, the algorithm conducts an inner loop indexed by superscript \(s\) to successively prune the path set until the candidate path set is halved (Line~\ref{line:prune-iteration}).
Inside each inner iteration, the algorithm first sets the confidence parameter \(\delta\upbra{s}_h\) and the exploration parameter \(\xi\upbra{s}_h\) (Line~\ref{line:prune-iteration-parameters}) to estimate the depolarizing parameter of each link by calling the \est{} subroutine (Line~\ref{line:best-path}).
These detailed parameters are chosen according to Lemma~\ref{lem:link-estimate}.
The algorithm then prunes the path set by removing paths that are significantly worse than the best path (Line~\ref{line:prune-path}).


\begin{algorithm}[tb]
    \caption{\bpipath: {\small Path-Level Best Path Identification}}
    \label{alg:path-level}
    \begin{algorithmic}[1]
        \Input Path set \(\mathcal{K}\), link set \(\mathcal{L}\),
        confidence parameter \(\delta\)
        \State \textbf{initialization:} Candidate set \(\mathcal{S}_0 \gets  \mathcal{K}\)
        \For{\(h=0\) \textbf{to} \(\ceil{\log_2 L}\)}\label{line:half-iteration}
        \RightComment{halve path set in each \(h\) iteration}
        \State \(\mathcal{S}_h\upbra{1}\gets \mathcal{S}_h\) and \(s\gets 1\)
        \While{\(\abs*{\mathcal{S}_h\upbra{s}} > \floor{\frac{L}{2^h}}\)}\label{line:prune-iteration}
        \RightComment{prune path set in each \(s\) iteration}
        \State \(\delta\upbra{s}_h \gets \frac{36}{\pi^4}\cdot \frac{\delta}{(h+1)^2 s^2}\) and
        \(\epsilon_h\upbra{s} \gets \frac{1}{2^s}\)
        \label{line:prune-iteration-parameters}
        \State \resizebox{.815\hsize}{!}{\(
            \log \hat{\bm p}_h\upbra{s}
            \!\!\gets\!\! \est\!\!\left(\!\!\mathcal{S}_h\upbra{1}\!\!,  C_0\frac{2 + (6 + \epsilon_h\upbra{s}/4) L }{(\epsilon_h\upbra{s}/4)^2}\! \log \frac{5 \abs*{\mathcal{S}_h\upbra{1}}}{\delta_h\upbra{s}}\!\right)\)}

        \State \(k\upbra{s}_h \gets \oracle(\hat{\bm p}_h\upbra{s}, \mathcal{S}_h\upbra{s})\)
        \label{line:best-path}
        \State \resizebox{.815\hsize}{!}{\(\mathcal{S}_h\upbra{s+1}
        \!{\gets}\!
        \left\{\! k\!\in\!\mathcal{S}_h\upbra{s}\!\!:\!  ({\bm x}(k_h\upbra{s}) - {\bm x}(k))^T \! \log \hat{\bm p}_h\upbra{s}\!\! <\! \epsilon_h\upbra{s} \!\right\}\)}\label{line:prune-path}
        \RightComment{prune path set}

        \State \(s\gets s+1\)
        \EndWhile
        \State \(\mathcal{S}_{h+1} \gets \mathcal{S}_h\upbra{s}\) \Comment{halve path set}
        \EndFor
        \Output Path in \(\mathcal{S}_{h+1}\)
    \end{algorithmic}
\end{algorithm}

\subsection{Quantum Resource Complexity Analysis}
\label{subsec:path-analysis}

We present the resource complexity analysis of the \bpipath algorithm as follows,

\begin{theorem}[Resource complexity of \bpipath]\label{thm:path-complexity}
    Given confidence parameter \(\delta > 0\),    Algorithm~\ref{alg:path-level} outputs the best path with probability at least \(1-\delta\), and its total resources consumed is upper bounded as follows,
    \begin{equation}
        \label{eq:complexity-path-level}
        \complexity^{\pl} = O\left( \Lmax \sum_{\ell=2}^{L} \frac{1}{(\Delta^\pl([\ell]))^2} \log \frac{K}{\delta} \right),
    \end{equation}
    where path \([\ell]\) refers to the path with the \(\ell\)-th largest \(p^\pl(\cdot)\) among all paths, i.e., \(\Delta^\pl([\ell])\) is the \(\ell\)-th smallest path gap.
\end{theorem}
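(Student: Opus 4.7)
The proof has two natural parts: correctness (the output is $k^*$ with probability $\geq 1-\delta$) and the resource bound.

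For \emph{correctness}, the plan is to invoke Lemma~\ref{lem:link-estimate} at every call of \est{}. The sample size chosen on Line~\ref{line:best-path} matches exactly what Lemma~\ref{lem:link-estimate} demands at error level $\epsilon_h^{(s)}/4$ and confidence $\delta_h^{(s)}$. A union bound over all $(h, s) \in \{0,\dots,\lceil \log_2 L\rceil\}\times \mathbb N^+$ uses $\sum_{h,s}\delta_h^{(s)} \leq \delta$ (which is why the factor $36/\pi^4$ appears---$(\pi^2/6)^2\cdot 36/\pi^4 = 1$) to produce a good event on which, writing $\hat F_h^{(s)}(k) := \bm x^T(k)\log \hat{\bm p}_h^{(s)}$, we have $|\hat F_h^{(s)}(k) - F(k)| \leq \epsilon_h^{(s)}/4$ simultaneously for all $k, h, s$. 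I would then show inductively that $k^* \in \mathcal{S}_h^{(s)}$ throughout: since $k_h^{(s)}$ maximizes $\hat F_h^{(s)}$ over $\mathcal{S}_h^{(s)}$ and $k^*\in \mathcal{S}_h^{(s)}$,
\[
\hat F_h^{(s)}(k_h^{(s)}) - \hat F_h^{(s)}(k^*) \leq \bigl[F(k^*)+\tfrac{\epsilon_h^{(s)}}{4}\bigr] - \bigl[F(k^*)-\tfrac{\epsilon_h^{(s)}}{4}\bigr] = \tfrac{\epsilon_h^{(s)}}{2} < \epsilon_h^{(s)},
\]
so $k^*$ survives the pruning test on Line~\ref{line:prune-path}. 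When the outer loop exits, the remaining singleton must be $k^*$.

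For \emph{resource complexity}, set $L_h := \lfloor L/2^h\rfloor$. A symmetric calculation to the one above shows every suboptimal path with $\Delta^{\pl}(k) \geq \tfrac{3}{2}\epsilon_h^{(s)}$ is pruned at $(h, s)$, so $|\mathcal{S}_h^{(s+1)}|$ is bounded by the number of indices $\ell$ with $\Delta^{\pl}([\ell]) < \tfrac{3}{2}\epsilon_h^{(s)}$. Consequently the inner loop at phase $h$ terminates once $\epsilon_h^{(s)} \leq \tfrac{2}{3}\Delta^{\pl}([L_h+1])$, which occurs after $s_h^* = O(\log(1/\Delta^{\pl}([L_h+1])))$ iterations since $\epsilon_h^{(s)} = 2^{-s}$. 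Because the per-iteration cost $N_h^{(s)}\cdot\Lmax$ scales like $2^{2s}$, the geometric sum across $s$ is dominated by its last term, yielding the phase cost
\[
\complexity^{\pl}_h = O\!\left(\frac{\Lmax\cdot L_h}{(\Delta^{\pl}([L_h+1]))^2}\log\frac{K}{\delta}\right),
\]
where $L_h$ (rather than $L$) appears because the effective dimension of the linear regression restricted to $\mathcal{S}_h^{(1)}$ is at most $|\mathcal{S}_h^{(1)}| \leq 2 L_h$.

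Summing over $h = 0, 1, \dots, \lceil \log_2 L \rceil$, I would apply a telescoping observation: the block of indices $\ell \in (L_{h+1}, L_h]$ contains $\Theta(L_h)$ elements, each with gap no larger than $\Delta^{\pl}([L_h+1])$, so $\sum_{\ell=L_{h+1}+1}^{L_h}(\Delta^{\pl}([\ell]))^{-2} \geq \Omega(L_h/(\Delta^{\pl}([L_h+1]))^2)$. Telescoping over disjoint blocks absorbs $\sum_h L_h/(\Delta^{\pl}([L_h+1]))^2$ into $\sum_{\ell=2}^{L} 1/(\Delta^{\pl}([\ell]))^2$ up to a constant, yielding the stated complexity. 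The main obstacle will be justifying the replacement of $L$ by the effective dimension $L_h$ in the per-phase cost; this requires verifying that the $G$-optimal design on $\mathcal{S}_h^{(1)}$ is supported on a subspace whose dimension shrinks with the candidate set, or equivalently that links shared by every path in $\mathcal{S}_h^{(1)}$ cancel inside the pruning criterion $(\bm x(k_h^{(s)}) - \bm x(k))^T\log\hat{\bm p}_h^{(s)}$. Without this reduction, a stray $\log L$ factor would persist, giving a slightly weaker bound of the same qualitative order.
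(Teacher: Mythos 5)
Your sketch cannot be checked line-by-line against an in-paper argument: the appendix proves only Lemma~\ref{lem:link-estimate} (by reduction to \citet[Corollary 16]{tao2018best}) and explicitly defers the proof of Theorem~\ref{thm:path-complexity} to an extended version. Measured against the algorithm as written and the machinery the paper cites, your outline is the intended one and its main steps check out: the union-bound constant \(36/\pi^4\), the survival of \(k^*\) under the \(\epsilon_h^{(s)}/4\)-accuracy guarantee, the elimination of every path with \(\Delta^{\pl}(k)\ge\tfrac32\epsilon_h^{(s)}\), the geometric domination of each inner loop by its last iteration, and the block-wise telescoping \(L_h(\Delta^{\pl}([L_h+1]))^{-2}\lesssim\sum_{\ell\in(L_{h+1},L_h]}(\Delta^{\pl}([\ell]))^{-2}\) over disjoint blocks are all correct (the \(\log s^2\) and \(\log(h+1)^2\) terms contribute only doubly-logarithmic factors absorbed into \(\log(K/\delta)\)).

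The one substantive issue is exactly the one you flag, but you should treat it as essential rather than cosmetic. The sample-size formula on Line~\ref{line:best-path} carries the ambient \(L\), so a literal reading gives a per-phase cost of \(O\bigl(\Lmax L(\Delta^{\pl}([L_h+1]))^{-2}\log(K/\delta)\bigr)\), and summing this over \(h\) is \emph{not} within a \(\log L\) factor of \(\sum_{\ell=2}^L(\Delta^{\pl}([\ell]))^{-2}\) in general: with \(\Delta^{\pl}([\ell])\propto\sqrt{\ell}\) the naive sum is \(\Theta(L)\) while the claimed sum is \(\Theta(\log L)\), a polynomial loss. The repair is the one you name. The G-optimal design computed inside \(\est(\mathcal{S}_h^{(1)},\cdot)\) is supported on \(\mathrm{span}\{\bm x(k):k\in\mathcal{S}_h^{(1)}\}\), whose dimension is at most \(|\mathcal{S}_h^{(1)}|=O(L_h)\); by Kiefer--Wolfowitz the optimal design value (hence the \(\epsilon^{-2}\) prefactor in the sample size) equals that dimension rather than \(L\), and the pruning statistic \((\bm x(k_h^{(s)})-\bm x(k))^T\log\hat{\bm p}_h^{(s)}\) only queries directions inside that span, so the pseudo-inverse of the (now singular) \(\bm A\) suffices. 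This is precisely the per-phase dimension-reduction step of \citet{tao2018best}; making it explicit --- i.e., reading the \(L\) in Line~\ref{line:best-path} as the effective dimension of \(\mathcal{S}_h^{(1)}\) --- is what turns your sketch into a proof of the stated bound.
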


Applying the LinkSelFiE algorithm~\citep{liu2024link} to the \bpi model incurs \(
O(\Lmax \sum_{k=2}^K\frac{1}{(\Delta^\pl(k))^2} \log \frac K \delta  )
\) resource complexity,
where the summation range over all \(K\) paths can be much larger (\(O\left( L^{L_{\text{max}}} \right)\) in the worst case) than the range over the top \(L\) paths of \bpipath's complexity in~\eqref{eq:complexity-path-level}. Therefore, \bpipath can significantly reduce the resource complexity compared to LinkSelFiE.

Compared to \bpilink's complexity in~\eqref{eq:complexity-link-level}, the complexity of \bpipath in~\eqref{eq:complexity-path-level} has a similar form, but differs in two key aspects:
(1) the path-level complexity is based on the path gap \(\Delta^\pl([\ell])\) while the link-level complexity uses the link gap \(\Delta_{\ell}\), and both gaps are close in magnitude;
(2) the link-level complexity has a \(\Lmaxsquare\) factor, which is often worse than the linear \(\Lmax\) dependence of the path-level complexity.
However, in the worst case of \(K=O(L^{\Lmax})\),
the \bpipath's complexity would become  \(
O\left(  \Lmaxsquare\sum_{\ell=2}^{L} \frac{1}{(\Delta^\pl([\ell]))^2} \log \frac{L}{\delta} \right),
\)
which is on the same order as the \bpilink's complexity.
Therefore, both path-level and link-level benchmarking (and their corresponding algorithms) are of practical interest, depending on the specific quantum network scenario and the available quantum resources.

\section{Learning Best Path for QKD}
\label{sec:qkd-extension}

In this section, we extend our results for identifying the path with the highest fidelity to the quantum key distribution (QKD) scenario, where we aim to find the path that maximizes QKD efficiency. We first introduce QKD (Section~\ref{subsec:qkd-background}) and define the secret key fraction (SKF) metric (Section~\ref{subsec:skf-utility}), which quantifies the efficiency of QKD. Then, we illustrate how to modify the \bpilink and \bpipath algorithms to identify the path with the largest SKF in Section~\ref{subsec:algorithm-qkd}.

\subsection{Background: Quantum Key Distribution and BB84 Protocol} \label{subsec:qkd-background}

Quantum key distribution (QKD) is an important application of quantum networks. It enables secure sharing of cryptographic keys between parties by leveraging quantum mechanics principles, such as the no-cloning theorem~\citep{wootters1982single}, to prevent interception by eavesdroppers. Multiple protocols have been devised for QKD, including BB84~\citep{bennet1984quantum}, E91~\citep{ekert1991quantum}, and B92~\citep{bennett1992quantum}, etc. Among them, the BB84 protocol, introduced by Bennett and Brassard in 1984~\citep{bennet1984quantum}, is the first and best known QKD protocol. The protocol proceeds as follows:

\emph{Stage 1: Quantum key initialization.}
Alice generates a random $N$-bit string \(\bm x_0\) and a random $N$-bit basis string.  She encodes \(\bm x_0\) into $N$ qubits using the basis string and transmits the qubits to Bob via a quantum channel. Bob also generates a random $N$-bit basis string and uses it to measure the received qubits. They publicly share their basis strings and retain only the bits where the bases match, forming an initial key \(\bm x_1\) of approximately \(N/2\) bits.
\emph{Stage 2: Noise and eavesdropper detection.}
Alice randomly selects a subset of \(\bm x_1\) for verification and sends the indices to Bob via a classical authentication channel. They compare these bits to determine the error rate \(\delta_{\text{error}}\). If the error rate exceeds a predefined threshold, they abort the protocol. The remaining bits form a key \(\bm x_2\) of about \(N/4\) bits.
\emph{Stage 3: Error correction and privacy amplification.}
Based on the error rate \(\delta_{\text{error}}\), Alice and Bob perform error correction and privacy amplification to derive a shorter, key \(\bm x_3\).

The efficiency of QKD is measured by the length of the final key \(\bm x_3\), which depends on the error rate influenced by quantum channel noise and potential eavesdropping. The fraction of the key retained after Stage 3 is referred to as the secret key fraction (SKF).
Ensuring a high SKF in a quantum network path enhances QKD security.

\subsection{QKD Objective: Secret Key Fraction}
\label{subsec:skf-utility}

To define the SKF, we first introduce the Werner state parameter \(w_\ell\) (resp., \(w^{\pl}(k)\)) for a quantum link \(\ell\) (resp., path \(k\)), which characterizes the noise level of the link (resp., path). This parameter depends on the error rate \(\delta_{\text{error}}\) in Stage 2 of BB84. Two properties of the Werner parameters are crucial:

\noindent
1. The Werner parameter of a path is the product of the Werner parameters of the links in the path~\citep{vardoyan2023quantum}, i.e.,
\begin{align}\label{eq:werner-state-parameter-product-relation}
    w^\pl(k) = \prod_{\ell \in \mathcal{L}(k)} w_\ell.
\end{align}
2. The Werner link parameter $w_\ell$ is related to the fidelity \(f_\ell\) and depolarizing parameter \(p_\ell\) as follows~\citep{vardoyan2023quantum}:
\begin{equation}\label{eq:werner-state-parameter-depolarizing-relation}
    w_\ell \overset{(a)}= \frac{4f_\ell - 1}{3} \overset{(b)}= \frac{2p_\ell + 1}{3},
\end{equation}
where equation (a) of~\eqref{eq:werner-state-parameter-depolarizing-relation} follows from the relationship between the Werner state parameter and the fidelity of the generated entanglement pair across the link \(f_\ell\)~\citep{vardoyan2023quantum}, i.e., \(f_\ell = \frac{3w_\ell + 1}{4}\), and equation (b) of~\eqref{eq:werner-state-parameter-depolarizing-relation} follows from the relationship between the link fidelity and the depolarizing parameter \(p_\ell\)~\citep{helsen2023benchmarking}, i.e., \(f_\ell = \frac{p_\ell + 1}{2}\).\footnote{Note that the fidelity definition in~\citep{helsen2023benchmarking} refers to the average fidelity for a quantum channel, which differs from the entanglement fidelity of the generated entanglement pair. However, as illustrated by~\citet{helsen2023benchmarking}, when the quantum channel is implemented via teleportation, and the entanglement pair is generated by the same, the fidelities are closely related.}

Denote \(u(k)\) as the SKF~\citep{shor2000simple} of deploying BB84~\citep{bennet1984quantum} on path \(k\). This SKF can be expressed as:
\[
    u(k) \coloneqq 1 - 2h\left( \frac{1-w^\pl(k)}{2} \right),
\]
where
\(h(x) = -x\log_2 x - (1-x) \log_2 (1-x)\) is the binary Shannon entropy. The SKF metric can be extended to the secret key rate (SKR) is the product of \(u(k)\) and the quantum channel capacity~\citep{shor2000simple}. For simplicity, we focus on SKF in this paper.
The SKF function \(u(k)\) of path \(k\) can be transformed by a monotonic bijection as follows,
\begin{align}
    u(k) & \longleftrightarrow  - h\left( \frac{1-w^\pl(k)}{2} \right) \nonumber                                                                 \\
         & \overset{(a)}\longleftrightarrow  w^\pl(k) \overset{(b)}\longleftrightarrow \sum_{\ell \in \mathcal{L}(k)} \log w_\ell      \nonumber \\
         & = \sum_{\ell \in \mathcal{L}(k)} \log \left(  \frac{2p_\ell + 1}{3} \right) \label{eq:fidelity-to-depolarizing}
    \eqqcolon U(k),
\end{align}
where mapping (a) follows from the monotonic increase of the binary entropy function \(h(x)\) for \(0 \le x \le \frac{1}{2}\), mapping (b) follows from~\eqref{eq:werner-state-parameter-product-relation} and the monotonic increase of the logarithm function, and \eqref{eq:fidelity-to-depolarizing} follows from \eqref{eq:werner-state-parameter-depolarizing-relation}.
As the transformed SKF \(U(k)\) has a monotonic bijection with SKF \(u(k)\), our objective becomes that of finding the path \(k^*\) that maximizes \(U(k)\).

\subsection{Algorithms for Learning the Best Path for QKD}\label{subsec:link-qkd}\label{subsec:algorithm-qkd}

Compared to the transformed fidelity \(F(k) = \sum_{\ell\in \mathcal{L}(k)} \log p_\ell\),
the transformed SKF function \(U(k)\) in \eqref{eq:fidelity-to-depolarizing} involves an additional linear transformation of the depolarizing parameters, namely replacing \(p_\ell\) with \({(2p_\ell + 1)}/{3}\).
Due to this transformation, the best path in terms of fidelity may not be the best path in terms of SKF.
To identify the best path for SKF, one needs to modify \bpilink and \bpipath to identify the best path for QKD with link-level and path-level benchmarking, respectively.

For link-level benchmarking, the only \oracle{} requires modification. Instead of finding the empirical shortest (best) path \(\hat k_t\) using edge lengths \(-\log p_\ell\), the \oracle{} now uses edge lengths \(-\log \left({(2 p_\ell + 1)}/{3}\right)\). The estimate transformation of the modified \bpilink algorithm is from the link depolarizing parameter estimate \(\hat p_\ell\)
to \(\log ((2\hat p_\ell + 1)/3)\).





For path-level benchmarking, modifying \bpipath is more involved. Recall that the linear regression \est{} subroutine (Algorithm~\ref{alg:link-estimate}) estimates the depolarizing parameter \(\log \hat p_\ell\) for each link \(\ell\). However, in the QKD scenario, we need to estimate \(\log ((2\hat p_\ell + 1)/3))\) for the SKF \(U(k)\) in~\eqref{eq:fidelity-to-depolarizing}.
Therefore, \est{} for maximizing SKF needs to go through the following transformation,
\(\hat p^\pl(k) \overset{(a)}\to \log \hat p_\ell \overset{(b)}\to \hat p_\ell \overset{(c)}\to \log ((2\hat p_\ell + 1)/3).\)
Transformation (a) transforms path-level benchmarking into link-level benchmarking by \(\est\). Transformation (b) applies an exponential function to negate the logarithm, and (c) converts the depolarizing parameter into the Werner parameter.
The following lemma depicts the accuracy of the final estimate of \({\log ((2\hat p_\ell + 1)/3)}\). It shows that the estimation errors of the modified \est{} are twice as large as that for estimating \(\log \hat p_\ell\), requiring an adjustment in the path prune condition in Line~\ref{line:prune-path} of \bpipath.

\begin{lemma}[Accuracy of \est{} for QKD]
    \label{lem:link-estimate-qkd}
    Under the same conditions as in Lemma~\ref{lem:link-estimate}, the estimate guarantees that  \[
        \mathbb{P}\left( \abs{\bm x^T(k)\! \left(\!\log \frac{2\bm p + 1}{3} - \log \frac{2\hat{\bm p}_t + 1}{3}\!\right)\!} \!\le\! 2 \epsilon, \forall k\in\mathcal{K} \!\right) \ge 1 - \delta.
    \]
\end{lemma}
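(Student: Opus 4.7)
The plan is to transfer the per-path estimation guarantee of Lemma~\ref{lem:link-estimate} through the Werner-parameter transformation $p \mapsto (2p+1)/3$. I first condition on the event $\mathcal{E}$ from Lemma~\ref{lem:link-estimate}: with probability at least $1-\delta$, $|\bm x^T(k)(\log \bm p - \log \hat{\bm p}_t)| \le \epsilon$ holds uniformly over $k \in \mathcal{K}$. Working on $\mathcal{E}$, I introduce the scalar map $g(\theta) \coloneqq \log((2 e^\theta + 1)/3)$, which encodes the Werner transformation entrywise; a direct computation yields $g'(\theta) = 2p/(2p+1) \in [0, 2/3) \subset [0, 2)$ for $p \in (0, 1)$. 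Applying the mean value theorem linkwise gives $g(\log p_\ell) - g(\log \hat p_{\ell, t}) = g'(\xi_\ell)(\log p_\ell - \log \hat p_{\ell, t})$ for some $\xi_\ell$ between $\log p_\ell$ and $\log \hat p_{\ell, t}$.

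Summing with weights $x_\ell(k)$ rewrites the transformed error as a reweighted inner product
\[
\bm x^T(k)\!\left[\log \tfrac{2\bm p+1}{3} - \log \tfrac{2\hat{\bm p}_t+1}{3}\right] = (D_k \bm x(k))^T (\log \bm p - \log \hat{\bm p}_t),
\]
where $D_k = \mathrm{diag}(g'(\xi_\ell))$ is a diagonal matrix with entries in $[0, 2/3)$. The main obstacle is that Lemma~\ref{lem:link-estimate} controls the inner product only along the direction $\bm x(k)$, not along the reweighted direction $D_k \bm x(k)$, so the per-path bound does not transfer mechanically.

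To close this gap, I retrace the G-optimal-design concentration underlying Lemma~\ref{lem:link-estimate}: that argument yields, for any fixed direction $\bm u$, $|\bm u^T(\log \bm p - \log \hat{\bm p}_t)| \le \sigma \|\bm u\|_{\bm A^{-1}} \sqrt{2\log(1/\delta')}$, and remains valid across a family of directions via union bound. I apply it with $\bm u = D_k \bm x(k)$ for each $k$ and union-bound over $\mathcal{K}$; because each entry of $D_k$ is bounded by $2/3 < 2$ and $D_k \bm x(k)$ is entrywise dominated by $\bm x(k) \in \{0,1\}^L$, a triangle-inequality decomposition of $\|D_k \bm x(k)\|_{\bm A^{-1}}$ together with the G-optimal-design condition $\max_k \|\bm x(k)\|_{\bm A^{-1}}^2 \le L/N$ inflates the concentration scale by at most a factor of two. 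Plugging back and using the sample-size calibration from Lemma~\ref{lem:link-estimate} --- which already makes the un-reweighted bound at most $\epsilon$ --- yields the claimed $2\epsilon$ bound uniformly in $k$. Combining with $\mathbb{P}(\mathcal{E}) \ge 1 - \delta$ completes the proof.
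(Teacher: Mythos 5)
Your setup via the mean value theorem is fine ($g(\theta)=\log((2e^{\theta}+1)/3)$ has $g'(\theta)=2p/(2p+1)\in(0,2/3)$ for $p=e^{\theta}\in(0,1)$), and you correctly spot the real issue: the transformed error is an inner product along the reweighted, link-dependent direction $D_k\bm x(k)$, which Lemma~\ref{lem:link-estimate} does not directly control. The problem is that your proposed repair of this step does not go through. First, $D_k=\mathrm{diag}(g'(\xi_\ell))$ depends on $\hat{\bm p}_t$ through the mean-value points $\xi_\ell$, so $D_k\bm x(k)$ is a \emph{random}, data-dependent direction; the fixed-direction G-optimal-design concentration bound plus a union bound over the $K$ paths cannot be invoked for it without a uniformity argument over all admissible diagonal reweightings. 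Second, the norm comparison you rely on is unjustified: entrywise domination $0\le (D_k\bm x(k))_\ell\le x_\ell(k)$ does \emph{not} imply $\norm{D_k\bm x(k)}_{\bm A^{-1}}\le\norm{\bm x(k)}_{\bm A^{-1}}$ (the $\bm A^{-1}$-norm is not monotone under entrywise domination, since $\bm A^{-1}$ can have negative off-diagonal entries), and the coordinatewise triangle inequality you gesture at only yields $\norm{D_k\bm x(k)}_{\bm A^{-1}}\le\sum_\ell x_\ell(k)\norm{\bm e_\ell}_{\bm A^{-1}}$, which can exceed $\norm{\bm x(k)}_{\bm A^{-1}}$ by a factor growing with $L(k)$, not by a factor of $2$. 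So the claimed $2\epsilon$ does not follow from your argument as written.

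The paper instead closes the lemma by a purely deterministic post-processing step (its Lemma~\ref{lma:scale-accuracy}): for scalars $0<a,b<1$ with $\abs{\log a-\log b}\le\epsilon$, one shows $\log\frac{2a+1}{2b+1}\in\left(\log\frac{a}{b}-\epsilon,\;\log\frac{a}{b}+\epsilon\right)$ and hence $\abs{\log\frac{2a+1}{3}-\log\frac{2b+1}{3}}\le 2\epsilon$; this is applied on top of the already-established guarantee of Lemma~\ref{lem:link-estimate}, so no new concentration argument is needed. (To be fair, the paper's own application is terse: used per link it requires per-link accuracy rather than only the per-path sums that Lemma~\ref{lem:link-estimate} states, and used with the path-level products $a=p^{\pl}(k)$, $b=\hat p^{\pl}(k)$ it bounds $\log\frac{2p^{\pl}(k)+1}{3}$ rather than $\sum_{\ell\in\mathcal{L}(k)}\log\frac{2p_\ell+1}{3}$ --- so your observation that the per-path bound does not transfer mechanically is a legitimate concern about the statement itself. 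But whatever the right fix, it should come from a deterministic Lipschitz/ratio bound of this kind, not from re-running the concentration along data-dependent directions.)
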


In summary, when SKF is the metric, \bpipath requires two modifications:
(1) add the processes (a), (b), and (c) to \est{} to estimate \(\log ((2\hat p_\ell + 1) / 3)\), and (2) revise the path prune condition in Line~\ref{line:prune-path} from \(\epsilon\) to \(2\epsilon\).

The modified \bpilink and \bpipath algorithms can identify the path with the maximum SKF in the quantum network using link-level and path-level benchmarking, respectively. The corresponding complexity analysis results remain the same as in Theorems~\ref{thm:link-complexity} and~\ref{thm:path-complexity} (in the big-\(O\) perspective) with the gaps \(\Delta_\ell\) and \(\Delta^{\pl}(k)\) defined with respect to the SKF \(U(k)\), and thus the details are omitted here.

\section{Simulation}\label{sec:simulation}

In this section, the performance of \bpilink and \bpipath are reported on both high-fidelity and high-SKF path identification tasks.
We first describe our experimental setups and present our results, focusing on resource complexity compared to the baselines.

\subsection{Experiment Setup}
\noindent \textbf{Network Topologies.} To demonstrate the advantage of our proposed algorithms,
we use a simple yet typical topology where the number of paths is around \(4\) times larger than the number of links,
as shown in Figure~\ref{fig:topology}.
The advantage of our algorithms would be more obvious for topologies where the number of paths significantly exceeds the number of links.
By varying the number of parallel links between adjacent nodes, we construct network topologies with different numbers of paths.
Assigned fidelities ensure a unique best path.

\noindent \textbf{Baseline Algorithms.}
We compare our algorithms to the following baseline algorithms: (1) LinkSelFiE~\cite{liu2024link}, (2) SuccElim~\citep{even2006action}, (3) Uniform-Path, and (4) Uniform-Link.
LinkSelFiE, SuccElim, and Uniform-Path are path-level algorithms that apply \(\bench\) to quantum paths.
LinkSelFiE and SuccElim use elimination strategies to identify the best path.
Uniform-Path (resp., Uniform-Link) uniformly applies \(\bench\) over all paths (resp., links) and chooses the empirical best path.
$T_0$ is set to 200 for Uniform-Path and Uniform-Link to ensure accurate estimations.
A smaller value of \(T_0\), 10 is used for SuccElim, \bpilink, and \bpipath, as these algorithms iteratively benchmark links or paths multiple times.
We specify the bounce number set \(\mathcal{M} = \{1,2,\dots, 10\}\) for all algorithms, following the guidelines in~\citep{helsen2023benchmarking}.



\noindent \textbf{Noise Models.} We simulate quantum noise using four typical quantum noise models~\citep{nielsen2001quantum}: (1) depolarizing, (2) dephasing, (3) amplitude damping, and (4) bit flip noise. These noise models are common in NISQ era quantum networks.
We define the fidelity of each link in Figure~\ref{fig:topology} as follows:
For links between nodes C and D, one link has a fidelity of 0.99, and the fidelities of other links decrease from 0.95, reducing by 0.1 per link.
For the two links between nodes A and B, and nodes B and C, the fidelities are set to 0.99 and 0.90, respectively.
This setup ensures that the best path is unique and that the paths are distinguishable.
To ensure a fair comparison between these noise models, we convert a given fidelity value into the corresponding noise parameters required to initialize each model.
All network benchmarking operations, e.g., the Clifford gates, are assumed to be noiseless. If they are noisy, one can first benchmark these gates individually~\cite {knill2008randomized} and then compensate for their impact on the network benchmarking.

All algorithms are simulated by an off-the-shelf quantum network using NetSquid~\cite{coopmans2021netsquid}.
We average the evaluation results over 10 trials and include the resulting error bars in each figure.

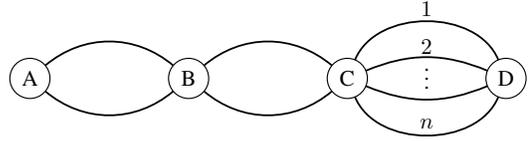
\begin{figure}
  \centering
  \resizebox{0.8\linewidth}{!}{
    \begin{tikzpicture}
      \SetGraphUnit{2.5} 
      \Vertices{line}{A,B,C,D} 
      \node at ($(C)!.5!(D)+(0,0.1)$) {$\vdots$};
      \node at ($(C)!.5!(D)+(0,1.1)$) {$1$};
      \node at ($(C)!.5!(D)+(0,0.5)$) {$2$};
      \node at ($(C)!.5!(D)+(0,-0.72)$) {$n$};
      \foreach \from/\to in {A/B,B/C}
        {
          \foreach \i in {1,...,2}
            {
              \pgfmathsetmacro{\bend}{80*(\i-1.5)}
              \Edge[style={bend left=\bend}](\from)(\to) 
            }
        }
      \foreach \from/\to in {C/D}
        {
          \foreach \i in {1,...,4}
            {
              \pgfmathsetmacro{\bend}{45*(\i-2.5)}
              \Edge[style={bend left=\bend}](\from)(\to) 
            }
        }
    \end{tikzpicture}
  }
  \caption{An example of network topologies used in our experiments. It has $n+4$ links and $4n$ paths between A and D.}
  \label{fig:topology}
\end{figure}

\begin{figure}[tbp]
  \centering
  \includegraphics[width=\linewidth]{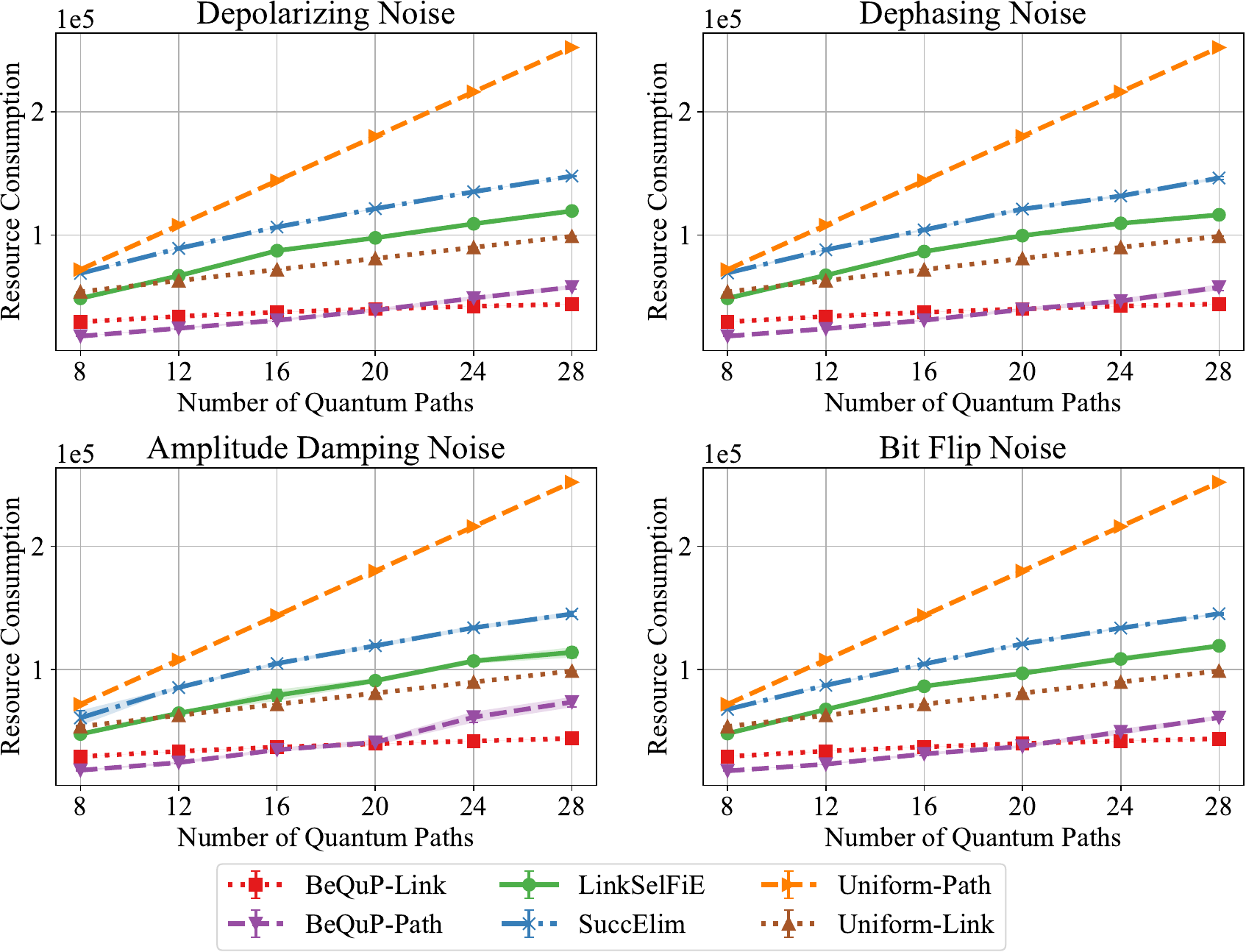}
  \caption{Comparison of quantum resource complexity under different noise models for high-fidelity path identification.}
  \label{fig:cost-and-path-num}
\end{figure}
\begin{figure}[htbp]
  \centering
  \includegraphics[width=\linewidth]{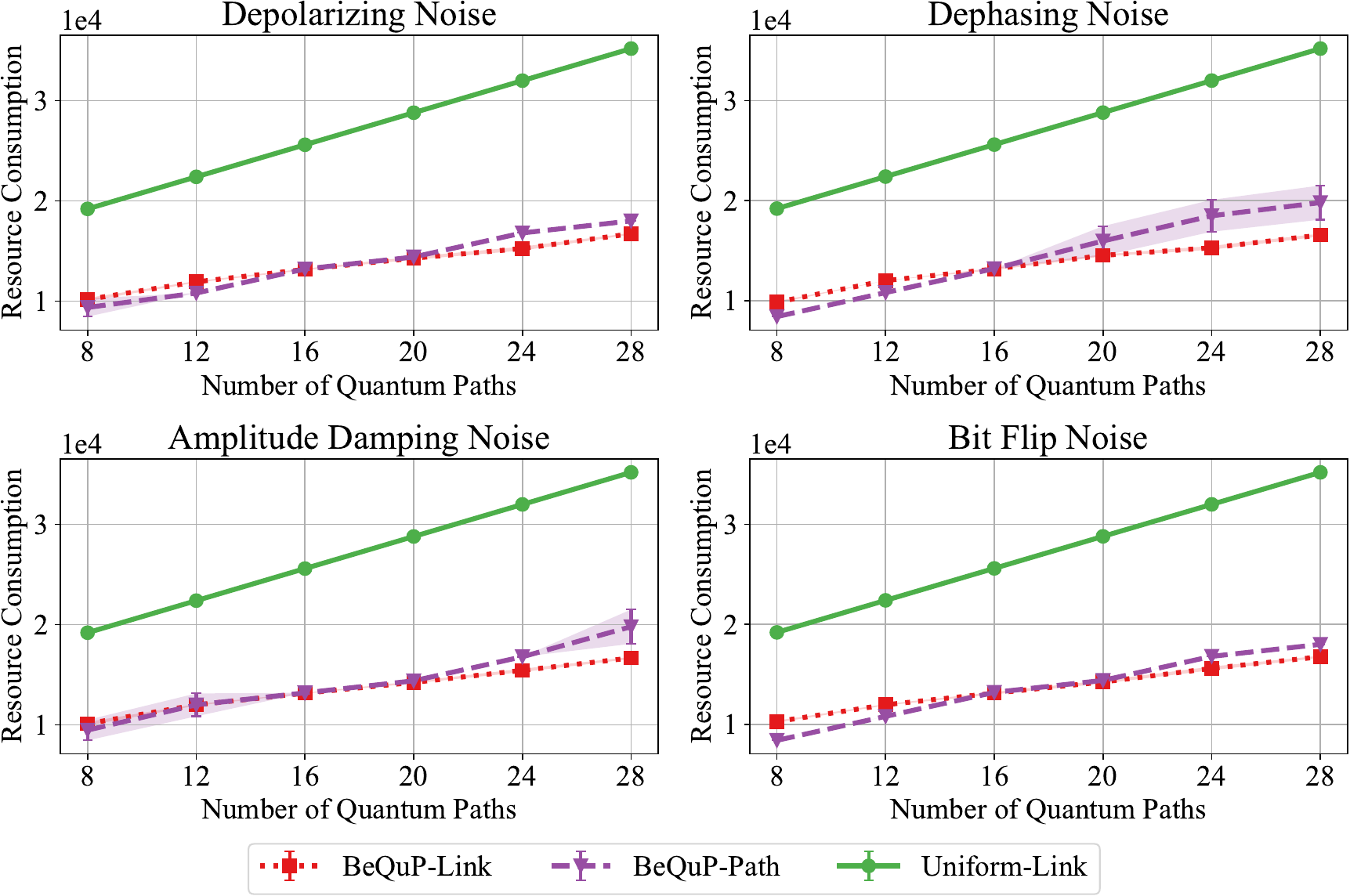}
  \caption{Comparison of quantum resource complexity under different noise models for high-SKF path identification.}
  \label{fig:skf-cost-and-path-num}
\end{figure}

\subsection{High-fidelity Path Identification}
We first evaluate the performance of high-fidelity path identification algorithms by applying them to networks with different numbers of paths $K=4n$ by varying \(n\in\{2,3,\dots, 7\}\) in Figure~\ref{fig:topology}
and measuring resource consumption (complexity).
Upon termination, all algorithms successfully identify the best path.
Figure~\ref{fig:cost-and-path-num} shows the relationship between resource consumption and the number of paths under different noise models, where resource consumption is quantified by the total number of quantum entangled pairs consumed.
As shown in Figure~\ref{fig:cost-and-path-num},
Uniform-Path's resource complexity scales in proportion to the number of paths because it allocates equal resources to each path.
Likewise, the resource complexity for Uniform-Link grows with the number of paths due to an increase in the number of involved links based on the network topology.
In contrast, LinkSelFiE, SuccElim, \bpilink, and \bpipath can adaptively allocate resources based on the quality of links or paths.
However, LinkSelFiE and SuccElim are topology oblivious, meaning that they treat each path as independent and overlook overlapping structures, resulting in significantly higher resource complexity.
Conversely, \bpipath leverages the overlapping links among paths to optimize resource complexity, and \bpilink keeps track of each quantum link's estimates and selectively benchmarks links to distinguish between paths, thereby achieving the best performance.
The curve for \bpipath has a higher slope than that of \bpilink with respect to the number of paths. Because the additional \est{} subroutine of \bpipath, for transferring the path-level observations to the link-level parameter, introduces a higher estimation error, resulting in a higher constant coefficient in its total quantum resource complexity.

\subsection{Best SKF Path Identification}
We assess the algorithms performance for identifying the best SKF paths by applying them across networks with varying numbers of paths $K$ and measuring the quantum resource complexity.
Obtaining the SKF for a path requires information about all links in the path.
However, LinkSelFiE, SuccElim, and Uniform-Path have no link estimation process, making them infeasible for this task.
Hence, we compare our algorithms \bpilink and \bpipath against Uniform-Link only.
The results are shown in Figure~\ref{fig:skf-cost-and-path-num}.
Similar to the trends observed in high-fidelity path identification,
both \bpilink and \bpipath achieve substantially lower resource complexity than Uniform-Link across all noise models, demonstrating their efficiency.

\section{Related Works}\label{sec:related-works}

Path selection in quantum networks is crucial for achieving long-distance quantum communication~\citep{van2013path,liu2024link,liu2024quantum,zhao2022e2e,zeng2022multi,liu2022quantum}. Most prior works in this domain have focused on formulating quantum network models~\citep{van2013path,vardoyan2023quantum}, proposing protocols for path selection (routing)~\citep{zeng2022multi},
and evaluating the performance of quantum communication protocols~\citep{van2013path}.

\citet{van2013path} were among the first to formalize the path selection problem in quantum networks, using Bell pairs per second as their metric and evaluating the performance of Dijkstra's algorithm. Although this metric is useful for assessing general quantum network performance, it is not directly applicable to the channel fidelity and the SKF metrics, which are our main focus.
\citet{vardoyan2023quantum} examined the quantum network utility maximization problem by formulating three different utility functions: distillable entanglement, secret key fraction (SKF), and negativity. While they did consider the SKF metric, their work centered on utility maximization as an offline optimization task and did not address the path selection problem or the design of online algorithms, which is our focus.

Recently, \citet{liu2024link} studied the link selection problem in quantum networks, simplifying paths between two nodes to a single link and proposing an online algorithm, LinkSelFiE, for selecting the link with the highest fidelity. While LinkSelFiE can be generalized to the path selection problem, it suffers from the curse of dimensionality when the number of paths is large, as detailed in Section~\ref{subsec:path-analysis}. Our work addresses this issue by introducing a novel online algorithm for the path selection problem in quantum networks, applicable to both the fidelity and the SKF metrics, even in scenarios with many paths.

Learning the best action among a set of possible actions is a fundamental problem in reinforcement learning~\citep{even2006action,chen2014combinatorial,tao2018best}. \citet{even2006action} investigate the identification of the optimal action from multiple individual actions within the stochastic bandit setting, where each arm corresponds to a stochastic distribution of rewards. \citet{chen2014combinatorial,du2021combinatorial} focus on the combinatorial pure exploration (CPE) problem, which involves finding the best combination of individual actions that yields the highest reward among a set of action combinations. Specifically, \citet{chen2014combinatorial} explore the additive reward function, and \citet{du2021combinatorial} examine the bottleneck function.


\citet{tao2018best} study best arm identification in linear bandits, aiming to determine the arm with the highest reward among a set of arms with linear reward functions. Their focus on the linear reward function structure relates to the multiplicative utility function in our \bpi problem with path-level feedback. However, their research emphasizes the linear additive reward function, whereas the multiplication function of depolarizing parameters or the additive function for the logarithm of rewards, as described in~\eqref{eq:fidelity-to-depolarizing}, has not yet been explored. Our \bpipath algorithm introduces a new parameter transformation procedure to estimate the depolarizing parameters of paths based on the path-level feedback in Section~\ref{subsec:path-algorithm}.
\section{Conclusion}\label{sec:conclusion}



This paper addressed the challenge of identifying the best paths, in terms of both fidelity and SKF, in a quantum network within an online learning framework. We introduced two online learning algorithms, \bpilink and \bpipath, designed to determine the best path by utilizing link-level and path-level benchmarking, respectively. Theoretical guarantees were provided for both algorithms, demonstrating their efficiency in minimizing quantum resource costs. Additionally, we validated our algorithms using NetSquid simulations, showing their performance against established baselines.

\bibliographystyle{IEEEtranN}
\bibliography{reference.bib}

\begin{thebibliography}{29}
\providecommand{\natexlab}[1]{#1}
\providecommand{\url}[1]{#1}
\csname url@samestyle\endcsname
\providecommand{\newblock}{\relax}
\providecommand{\bibinfo}[2]{#2}
\providecommand{\BIBentrySTDinterwordspacing}{\spaceskip=0pt\relax}
\providecommand{\BIBentryALTinterwordstretchfactor}{4}
\providecommand{\BIBentryALTinterwordspacing}{\spaceskip=\fontdimen2\font plus
\BIBentryALTinterwordstretchfactor\fontdimen3\font minus
  \fontdimen4\font\relax}
\providecommand{\BIBforeignlanguage}[2]{{%
\expandafter\ifx\csname l@#1\endcsname\relax
\typeout{** WARNING: IEEEtranN.bst: No hyphenation pattern has been}%
\typeout{** loaded for the language `#1'. Using the pattern for}%
\typeout{** the default language instead.}%
\else
\language=\csname l@#1\endcsname
\fi
#2}}
\providecommand{\BIBdecl}{\relax}
\BIBdecl

\bibitem[Wehner et~al.(2018)Wehner, Elkouss, and Hanson]{wehner2018quantum}
S.~Wehner, D.~Elkouss, and R.~Hanson,
  ``\href{https://www.science.org/doi/pdf/10.1126/science.aam9288}{Quantum
  internet: A vision for the road ahead},'' \emph{Science}, vol. 362, no. 6412,
  p. eaam9288, 2018.

\bibitem[Degen et~al.(2017)Degen, Reinhard, and Cappellaro]{degen2017quantum}
C.~L. Degen, F.~Reinhard, and P.~Cappellaro,
  ``\href{https://link.aps.org/pdf/10.1103/RevModPhys.89.035002}{Quantum
  sensing},'' \emph{Reviews of modern physics}, vol.~89, no.~3, p. 035002,
  2017.

\bibitem[Bennett and Brassard(1984)]{bennet1984quantum}
C.~H. Bennett and G.~Brassard,
  ``\href{https://www.sciencedirect.com/science/article/pii/S0304397514004241}{Quantum
  cryptography: Public key distribution and coin tossing},'' in
  \emph{Proceedings of the IEEE International Conference on Computers, Systems,
  and Signal Processing, Bangalore}, 1984, pp. 175--179.

\bibitem[Nielsen and Chuang(2001)]{nielsen2001quantum}
M.~A. Nielsen and I.~L. Chuang,
  \emph{\href{https://michaelnielsen.org/qcqi/QINFO-book-nielsen-and-chuang-toc-and-chapter1-nov00-acro5.pdf}{Quantum
  computation and quantum information}}.\hskip 1em plus 0.5em minus 0.4em\relax
  Cambridge university press Cambridge, 2001, vol.~2.

\bibitem[Briegel et~al.(1998)Briegel, D{\"u}r, Cirac, and
  Zoller]{briegel1998quantum}
H.-J. Briegel, W.~D{\"u}r, J.~I. Cirac, and P.~Zoller,
  ``\href{https://link.aps.org/pdf/10.1103/PhysRevLett.81.5932}{Quantum
  repeaters: the role of imperfect local operations in quantum
  communication},'' \emph{Physical Review Letters}, vol.~81, no.~26, p. 5932,
  1998.

\bibitem[Dai et~al.(2021)Dai, Rinaldi, and Towsley]{dai2021entanglement}
W.~Dai, A.~Rinaldi, and D.~Towsley,
  ``\href{https://arxiv.org/pdf/2110.04116}{Entanglement swapping in quantum
  switches: Protocol design and stability analysis},'' \emph{preprint
  arXiv:2110.04116}, 2021.

\bibitem[Wootters and Zurek(1982)]{wootters1982single}
W.~K. Wootters and W.~H. Zurek,
  ``\href{https://www.nature.com/articles/299802a0.pdf}{A single quantum cannot
  be cloned},'' \emph{Nature}, vol. 299, no. 5886, pp. 802--803, 1982.

\bibitem[Helsen and Wehner(2023)]{helsen2023benchmarking}
J.~Helsen and S.~Wehner,
  ``\href{https://www.nature.com/articles/s41534-022-00628-x.pdf}{A
  benchmarking procedure for quantum networks},'' \emph{npj Quantum
  Information}, vol.~9, no.~1, p.~17, 2023.

\bibitem[Coopmans et~al.(2021)Coopmans, Knegjens, Dahlberg, Maier, Nijsten,
  de~Oliveira~Filho, Papendrecht, Rabbie, Rozpkedek, Skrzypczyk,
  et~al.]{coopmans2021netsquid}
T.~Coopmans, R.~Knegjens, A.~Dahlberg, D.~Maier, L.~Nijsten,
  J.~de~Oliveira~Filho, M.~Papendrecht, J.~Rabbie, F.~Rozpkedek, M.~Skrzypczyk
  \emph{et~al.},
  ``\href{https://www.nature.com/articles/s42005-021-00647-8.pdf}{Netsquid, a
  network simulator for quantum information using discrete events},''
  \emph{Communications Physics}, vol.~4, no.~1, p. 164, 2021.

\bibitem[Muralidharan et~al.(2016)Muralidharan, Li, Kim, L{\"u}tkenhaus, Lukin,
  and Jiang]{muralidharan2016optimal}
S.~Muralidharan, L.~Li, J.~Kim, N.~L{\"u}tkenhaus, M.~D. Lukin, and L.~Jiang,
  ``\href{https://www.nature.com/articles/srep20463.pdf}{Optimal architectures
  for long distance quantum communication},'' \emph{Scientific reports},
  vol.~6, no.~1, p. 20463, 2016.

\bibitem[Munro et~al.(2012)Munro, Stephens, Devitt, Harrison, and
  Nemoto]{munro2012quantum}
W.~J. Munro, A.~M. Stephens, S.~J. Devitt, K.~A. Harrison, and K.~Nemoto,
  ``\href{https://www.nature.com/articles/nphoton.2012.243.pdf}{Quantum
  communication without the necessity of quantum memories},'' \emph{Nature
  Photonics}, vol.~6, no.~11, pp. 777--781, 2012.

\bibitem[Magesan et~al.(2012)Magesan, Gambetta, and
  Emerson]{magesan2012characterizing}
E.~Magesan, J.~M. Gambetta, and J.~Emerson,
  ``\href{https://doi.org/10.1103/PhysRevA.85.042311}{Characterizing quantum
  gates via randomized benchmarking},'' \emph{Physical Review A}, vol.~85,
  no.~4, p. 042311, 2012.

\bibitem[Gottesman(1998)]{clifford1998the}
D.~Gottesman, ``\href{https://doi.org/10.48550/arXiv.quant-ph/9807006}{The
  Heisenberg representation of quantum computers},'' \emph{arXiv
  quant-ph/9807006}, 1998.

\bibitem[Vardoyan and Wehner(2023)]{vardoyan2023quantum}
G.~Vardoyan and S.~Wehner, ``\href{https://arxiv.org/abs/2210.08135}{Quantum
  network utility maximization},'' in \emph{2023 IEEE International Conference
  on Quantum Computing and Engineering (QCE)}, vol.~1.\hskip 1em plus 0.5em
  minus 0.4em\relax IEEE, 2023, pp. 1238--1248.

\bibitem[Liu et~al.(2024{\natexlab{a}})Liu, Li, Wang, and Lui]{liu2024link}
M.~Liu, Z.~Li, X.~Wang, and J.~C. Lui,
  ``\href{https://doi.org/10.1109/INFOCOM52122.2024.10621263}{LinkSelFiE: Link
  Selection and Fidelity Estimation in Quantum Networks},'' in
  \emph{Proceedings of the IEEE Conference on Computer Communications}, 2024.

\bibitem[Dijkstra(1959)]{dijkstra1959note}
E.~Dijkstra, ``\href{https://dl.acm.org/doi/abs/10.1145/3544585.3544600}{A note
  on two problems in connexion with graphs},'' \emph{Numerische Mathematik},
  vol.~1, no.~1, pp. 269--271, 1959.

\bibitem[Chen et~al.(2014)Chen, Lin, King, Lyu, and
  Chen]{chen2014combinatorial}
S.~Chen, T.~Lin, I.~King, M.~R. Lyu, and W.~Chen,
  ``\href{https://proceedings.neurips.cc/paper_files/paper/2014/file/e56954b4f6347e897f954495eab16a88-Paper.pdf}{Combinatorial
  pure exploration of multi-armed bandits},'' \emph{Advances in neural
  information processing systems}, vol.~27, 2014.

\bibitem[Tao et~al.(2018)Tao, Blanco, and Zhou]{tao2018best}
C.~Tao, S.~Blanco, and Y.~Zhou,
  ``\href{http://proceedings.mlr.press/v80/tao18a/tao18a.pdf}{Best arm
  identification in linear bandits with linear dimension dependency},'' in
  \emph{International Conference on Machine Learning}.\hskip 1em plus 0.5em
  minus 0.4em\relax PMLR, 2018, pp. 4877--4886.

\bibitem[Ekert(1991)]{ekert1991quantum}
A.~K. Ekert,
  ``\href{https://link.aps.org/pdf/10.1103/PhysRevLett.67.661}{Quantum
  cryptography based on Bell’s theorem},'' \emph{Physical review letters},
  vol.~67, no.~6, p. 661, 1991.

\bibitem[Bennett(1992)]{bennett1992quantum}
C.~H. Bennett,
  ``\href{https://journals.aps.org/prl/abstract/10.1103/PhysRevLett.68.3121}{Quantum
  cryptography using any two nonorthogonal states},'' \emph{Physical review
  letters}, vol.~68, no.~21, p. 3121, 1992.

\bibitem[Shor and Preskill(2000)]{shor2000simple}
P.~W. Shor and J.~Preskill,
  ``\href{https://link.aps.org/pdf/10.1103/PhysRevLett.85.441}{Simple proof of
  security of the {BB84} quantum key distribution protocol},'' \emph{Physical
  review letters}, vol.~85, no.~2, p. 441, 2000.

\bibitem[Even-Dar et~al.(2006)Even-Dar, Mannor, Mansour, and
  Mahadevan]{even2006action}
E.~Even-Dar, S.~Mannor, Y.~Mansour, and S.~Mahadevan,
  ``\href{https://www.jmlr.org/papers/volume7/evendar06a/evendar06a.pdf}{Action
  elimination and stopping conditions for the multi-armed bandit and
  reinforcement learning problems},'' \emph{Journal of machine learning
  research}, vol.~7, no.~6, 2006.

\bibitem[Knill et~al.(2008)Knill, Leibfried, Reichle, Britton, Blakestad, Jost,
  Langer, Ozeri, Seidelin, and Wineland]{knill2008randomized}
E.~Knill, D.~Leibfried, R.~Reichle, J.~Britton, R.~B. Blakestad, J.~D. Jost,
  C.~Langer, R.~Ozeri, S.~Seidelin, and D.~J. Wineland,
  ``\href{https://link.aps.org/pdf/10.1103/PhysRevA.77.012307}{Randomized
  benchmarking of quantum gates},'' \emph{Physical Review A—Atomic,
  Molecular, and Optical Physics}, vol.~77, no.~1, p. 012307, 2008.

\bibitem[Van~Meter et~al.(2013)Van~Meter, Satoh, Ladd, Munro, and
  Nemoto]{van2013path}
R.~Van~Meter, T.~Satoh, T.~D. Ladd, W.~J. Munro, and K.~Nemoto,
  ``\href{https://arxiv.org/pdf/1206.5655}{Path selection for quantum repeater
  networks},'' \emph{Networking Science}, vol.~3, pp. 82--95, 2013.

\bibitem[Liu et~al.(2024{\natexlab{b}})Liu, Li, Cai, Allcock, Zhang, and
  Lui]{liu2024quantum}
M.~Liu, Z.~Li, K.~Cai, J.~Allcock, S.~Zhang, and J.~C. Lui,
  ``\href{https://doi.org/10.1109/INFOCOM52122.2024.10621359}{Quantum BGP with
  Online Path Selection via Network Benchmarking},'' in \emph{Proceedings of
  the IEEE Conference on Computer Communications}, 2024.

\bibitem[Zhao et~al.(2022)Zhao, Zhao, and Qiao]{zhao2022e2e}
Y.~Zhao, G.~Zhao, and C.~Qiao,
  ``\href{https://ieeexplore.ieee.org/iel7/9796607/9796652/09796814.pdf}{{E2E}
  fidelity aware routing and purification for throughput maximization in
  quantum networks},'' in \emph{IEEE INFOCOM 2022-IEEE Conference on Computer
  Communications}, 2022, pp. 480--489.

\bibitem[Zeng et~al.(2022)Zeng, Zhang, Liu, Liu, and Yang]{zeng2022multi}
Y.~Zeng, J.~Zhang, J.~Liu, Z.~Liu, and Y.~Yang,
  ``\href{https://ieeexplore.ieee.org/iel7/9796607/9796652/09796810.pdf}{Multi-entanglement
  routing design over quantum networks},'' in \emph{IEEE INFOCOM 2022-IEEE
  Conference on Computer Communications}.\hskip 1em plus 0.5em minus
  0.4em\relax IEEE, 2022, pp. 510--519.

\bibitem[Liu et~al.(2022)Liu, Allcock, Cai, Zhang, and Lui]{liu2022quantum}
M.~Liu, J.~Allcock, K.~Cai, S.~Zhang, and J.~C. Lui,
  ``\href{https://doi.org/10.1109/MNET.001.2200151}{Quantum Networks with
  Multiple Service Providers: Transport Layer Protocols and Research
  Opportunities},'' \emph{IEEE Network}, vol.~36, no.~5, pp. 56--62, 2022.

\bibitem[Du et~al.(2021)Du, Kuroki, and Chen]{du2021combinatorial}
Y.~Du, Y.~Kuroki, and W.~Chen,
  ``\href{https://proceedings.neurips.cc/paper_files/paper/2021/file/c92a10324374fac681719d63979d00fe-Paper.pdf}{Combinatorial
  Pure Exploration with Bottleneck Reward Function},'' \emph{Advances in Neural
  Information Processing Systems}, vol.~34, pp. 23\,956--23\,967, 2021.

\end{thebibliography}

\newpage
\onecolumn
\appendix

\subsection{Proof for Algorithm~\ref{alg:link-level}}

We first define event \(\mathcal{E}_t\) as \(
\mathcal{E}_t \coloneqq \left\{ \forall \ell \in \mathcal{L}: \abs{\hat p_{\ell,t} - p_\ell} \le \rad_{\ell, t}\right\}.
\)
In the following, we prove the theorem in three steps:
(1) prove that the event \(\cap_{t=1}^\infty \mathcal{E}_t\) holds with probability at least \(1 - \delta\);
(2) \emph{(correctness)} given that the event \(\cap_{t=1}^\infty \mathcal{E}_t\) holds, prove that the algorithm identifies the best path;
(3) \emph{(complexity)} given that the event \(\cap_{t=1}^\infty \mathcal{E}_t\) holds, prove that the cost complexity is \(O\left( \Lmax^2 H \log \left( \frac{LH}{\delta} \right) \right)\).

\subsubsection{High Probability Event}

By letting \(\delta \gets \frac{\delta}{2Lt^3}\) in Lemma~\ref{lma:benchmarking-concentration}, we have that, given \(\hat p_{\ell,t}\) as the average of \(N\) samples, \begin{equation}\label{eq:concentration-link-level}
    \mathbb{P}\left( \abs{\hat p_{\ell,t}(N)  - p_\ell}
    \le \sqrt{\frac{C\log \frac{2Lt^3}{\delta}}{N}}\right)
    \ge 1 - \frac{\delta}{2Lt^3}.
\end{equation}

As \(\rad_{\ell, t} = \sqrt{\frac{C\log\left( \frac{2Lt^3}{\delta} \right)}{N_{\ell,t}}}\), we have \begin{equation}
    \label{eq:single-event}
    \begin{split}
        \mathbb{P}(\mathcal{E}_t)
         & = 1 - \mathbb{P}\left( \exists \ell \in \mathcal{L}: \abs{\hat p_{\ell,t} - p_\ell} > \rad_{\ell, t}\right)
        \\
         & \ge 1 - \sum_{\ell \in \mathcal{L}} \mathbb{P}\left( \abs{\hat p_{\ell,t}(N_{\ell,t})  - p_\ell} > \sqrt{\frac{C\log \frac{2Lt^3}{\delta}}{N_{\ell,t}}} \right)
        \\
         & \overset{(a)}\ge 1 - \sum_{\ell \in \mathcal{L}} \sum_{N=1}^t \mathbb{P}\left( \abs{\hat p_{\ell,t}(N)  - p_\ell} > \sqrt{\frac{C\log \frac{2Lt^3}{\delta}}{N}} \right)
        \\
         & \overset{(b)}\ge 1 - \sum_{\ell \in \mathcal{L}} \sum_{N=1}^t  \frac{\delta}{2Lt^3} = 1 - \frac{\delta}{2t^2},
    \end{split}
\end{equation}
where inequality (a) is due to the fact that the number of samples \(N_{\ell,t}\) is at most \(t\), and inequality (b) is due to the concentration inequality in~\eqref{eq:concentration-link-level}.

Therefore, we have \[
    \mathbb{P}\left( \cap_{t=1}^\infty \mathcal{E}_t \right)
    \ge 1 - \sum_{t=1}^\infty \mathbb{P}(\neg \mathcal{E}_t) \overset{(a)}\ge 1 - \sum_{t=1}^\infty \frac{\delta}{2t^2} = 1 - \frac{\pi^2}{12}\delta \ge 1 - \delta,
\]
where inequality (a) is due to~\eqref{eq:single-event}.

\subsubsection{Correctness}

In this subsubsection, we show that when the event \(\cap_{t=1}^\infty \mathcal{E}_t\) holds, the algorithm identifies the best path. That is, when the algorithm terminates at time slot \(t\), the output arm \(\hat k_t\) is equal to the best path \(k^*\). Following the notations of the depolarizing parameter \(p_\ell\), the empirical mean \(\hat{p}_{\ell,t}\), and the confidence bound \(\tilde{p}_{\ell,t}\) in Algorithm~\ref{alg:link-level}, we define \(\hat{w}_{\ell,t} \coloneqq \frac{2\hat{p}_{\ell,t} + 1}{3} , \tilde{w}_{\ell,t} \coloneqq  \frac{2\tilde{p}_{\ell,t} + 1}{3}\)
and \(\hat u_t(k) \coloneqq \sum_{\ell \in \mathcal{L}(k)}\log \hat w_{\ell,t}, \tilde u_t(k) \coloneqq \sum_{\ell \in \mathcal{L}(k)}\log \tilde w_{\ell,t}\).

If the terminated output \(\hat k_t \neq k^*\), then there exists another \(k'_t\neq \hat k_t\) such that \(u(k'_t) > u(\hat k_t)\). Recalling that \(\tilde k_t =\argmax_{k\in\mathcal{K}} \tilde u(k)\), we also have \(\tilde u_t(\tilde k_t) \ge  \tilde u_t(k'_t)\).

\begin{align}
    \nonumber
    \tilde u_t(\tilde k_t) - \tilde u_t (\hat k_t)
     & \ge \tilde u_t(k'_t) - \tilde u_t (\hat k_t)
    \\\nonumber
     & = \sum_{\ell \in \mathcal{L}(k'_t)} \log\tilde w_{\ell,t} - \sum_{\ell \in \mathcal{L}(\hat k_t)} \log\tilde w_{\ell,t}
    \\\nonumber
     & = \sum_{\ell \in \mathcal{L}(k'_t) \setminus \mathcal{L}(\hat k_t)} \log\tilde w_{\ell,t} - \sum_{\ell \in \mathcal{L}(\hat k_t) \setminus \mathcal{L}(k'_t)} \log\tilde w_{\ell,t}
    \\\nonumber
     & = \sum_{\ell \in \mathcal{L}(k'_t) \setminus \mathcal{L}(\hat k_t)} \log \left( \frac{2(\hat p_{\ell,t} + \rad_{\ell,t}) + 1}{3} \right) - \sum_{\ell \in \mathcal{L}(\hat k_t) \setminus \mathcal{L}(k'_t)} \log \left( \frac{2(\hat p_{\ell,t} - \rad_{\ell,t}) + 1}{3} \right)
    \\\label{eq:bound-to-true-mean}
     & \ge \sum_{\ell \in \mathcal{L}(k'_t) \setminus \mathcal{L}(\hat k_t)} \log \left( \frac{2p_{\ell} + 1}{3} \right) - \sum_{\ell \in \mathcal{L}(\hat k_t) \setminus \mathcal{L}(k'_t)} \log \left( \frac{2 p_{\ell} + 1}{3} \right)
    \\\nonumber
     & = u(k'_t) - u(\hat k_t) > 0,
\end{align}
where~\eqref{eq:bound-to-true-mean} is due to the fact that the event \(\cap_{t=1}^\infty \mathcal{E}_t\) holds, i.e., \(p_\ell \in \left( \hat p_{\ell,t} - \rad_{\ell,t}, \hat p_{\ell, t} + \rad_{\ell,t} \right)\).

The above inequality shows that \(\tilde u_t(\tilde k_t) > \tilde u_t (\hat k_t)\), which contradicts the fact the stopping condition \(\tilde u_t(\tilde k_t) = \tilde u_t (\hat k_t)\).
Therefore, we prove the terminated output \(\hat k_t = k^*\).

\subsubsection{Complexity}

To show the complexity of Algorithm~\ref{alg:link-level}, we prove that for any time slot \(t\) and link \(\ell'\), assuming event \(\mathcal{E}_t\) holds, if \(\rad_{\ell',t} > \frac{\Delta_{\ell'}}{6 \Lmax}\), then link \(\ell'\) is not selected at this time slot, that is, \(\ell_t \neq \ell'\).

To prove the above statement via contradiction, we assume that \(\ell_t = \ell'\) and \(\rad_{\ell',t} \le \frac{\Delta_{\ell'}}{6 \Lmax}\). There are two cases of the link \(\ell'\): \begin{itemize}
    \item Case (1): \(\ell' \in ((\mathcal{L}(\tilde k_t) \setminus \mathcal{L}(\hat k_t))\cap \mathcal{L}(k^*)) \cup ((\mathcal{L}(\hat k_t) \setminus \mathcal{L}(\tilde k_t))\setminus \mathcal{L}(k^*))\), implying \(\hat k_t \neq k^*\);
    \item Case (2): \(\ell' \in ((\mathcal{L}(\tilde k_t) \setminus \mathcal{L}(\hat k_t))\setminus \mathcal{L}(k^*)) \cup ((\mathcal{L}(\hat k_t) \setminus \mathcal{L}(\tilde k_t))\cap \mathcal{L}(k^*))\), implying \(\tilde k_t \neq k^*\).
\end{itemize}

For Case (1), we have \begin{align}
    \nonumber
    \hat u_t(\tilde k_t) - \hat u_t(\hat k_t)
     & = \sum_{\ell \in \mathcal{L}(\tilde k_t)} \log\hat w_{\ell,t} - \sum_{\ell \in \mathcal{L}(\hat k_t)} \log\hat w_{\ell,t}
    \\\nonumber
     & = \sum_{\ell \in \mathcal{L}(\tilde k_t) \setminus \mathcal{L}(\hat k_t)} \log\hat w_{\ell,t} - \sum_{\ell \in \mathcal{L}(\hat k_t) \setminus \mathcal{L}(\tilde k_t)} \log\hat w_{\ell,t}
    \\\label{eq:case-1-scale-hat-w-to-tilde-w}
     & \ge \sum_{\ell \in \mathcal{L}(\tilde k_t) \setminus \mathcal{L}(\hat k_t)} \left(\log \tilde w_{\ell, t} - \frac{2}{3} \rad_{\ell,t} \right) - \sum_{\ell \in \mathcal{L}(\hat k_t) \setminus \mathcal{L}(\tilde k_t)} \left( \log\tilde w_{\ell, t} + \frac{2}{3} \rad_{\ell,t} \right)
    \\\nonumber
     & = \tilde u_t(\tilde k_t) - \tilde u_t(\hat k_t) - \frac{2}{3}  \sum_{\ell \in \mathcal{L}(\tilde k_t) \bigtriangleup \mathcal{L}(\hat k_t)} \rad_{\ell,t}
    \\\label{eq:case-1-tilde-k-to-k-star}
     & \ge \tilde u_t(k^*) - \tilde u_t(\hat k_t) - \frac{2}{3}  \sum_{\ell \in \mathcal{L}(\tilde k_t) \bigtriangleup \mathcal{L}(\hat k_t)} \rad_{\ell,t}
    \\\nonumber
     & = \sum_{\ell \in \mathcal{L}(k^*) \setminus \mathcal{L}(\hat k_t)} \log\tilde w_{\ell,t} - \sum_{\ell \in \mathcal{L}(\hat k_t) \setminus \mathcal{L}(k^*)} \log\tilde w_{\ell,t}
    - \frac{2}{3}  \sum_{\ell \in \mathcal{L}(\tilde k_t) \bigtriangleup \mathcal{L}(\hat k_t)} \rad_{\ell,t}
    \\\label{eq:case-1-tilde-w-to-w}
     & \ge \sum_{\ell \in \mathcal{L}(k^*) \setminus \mathcal{L}(\hat k_t)}  \log w_{\ell} - \sum_{\ell \in \mathcal{L}(\hat k_t) \setminus \mathcal{L}(k^*)}  \log w_{\ell}
    - \frac{2}{3}  \sum_{\ell \in \mathcal{L}(\tilde k_t) \bigtriangleup \mathcal{L}(\hat k_t)} \rad_{\ell,t}
    \\\nonumber
     & = u(k^*) - u(\hat k_t) - \frac{2}{3}  \sum_{\ell \in \mathcal{L}(\tilde k_t) \bigtriangleup \mathcal{L}(\hat k_t)} \rad_{\ell,t}
    \\\label{eq:case-1-to-gap}
     & \ge \Delta_{\ell'} - \frac{2\Delta_{\ell'}}{9}
    \\\nonumber
     & > 0,
\end{align}
where~\eqref{eq:case-1-scale-hat-w-to-tilde-w} is because \(\log \hat w_{\ell,t} = \log\frac{2\hat p_{\ell, t} + 1}{3} =
\log \left( \frac{2 (\hat p_{\ell,t} + \rad_{\ell,t}) + 1}{3} - \frac{2\hat p_{\ell,t}}{3}\right)
\ge \log \frac{2 (\hat p_{\ell,t} + \rad_{\ell,t}) + 1}{3} - \frac{2\hat p_{\ell,t}}{3} = \log\tilde w_{\ell,t} - \frac{2\hat p_{\ell,t}}{3}\) for \(\ell \not\in \mathcal{L}(\hat k_t)\)
and \(\log\hat w_{\ell,t} \le \tilde \log w_{\ell,t} + \frac{2}{3} \rad_{\ell,t}\) for \(\ell \in \mathcal{L}(\hat k_t)\),
\eqref{eq:case-1-tilde-k-to-k-star} is due to that \(\tilde k_t \in \argmax_{k\in\mathcal{K}} \tilde u_t(k)\),
\eqref{eq:case-1-tilde-w-to-w} is because \(\tilde w_{\ell,t} \ge  w_\ell\) for \(\ell \not \in \mathcal{L}(\hat k_t)\)
and \(\tilde w_{\ell,t} \le w_\ell\) for \(\ell \in \mathcal{L}(\hat k_t)\),
and~\eqref{eq:case-1-to-gap} is by the definition of \(\Delta_{\ell'}\) and the condition \(\rad_{\ell',t} \le \frac{\Delta_{\ell'}}{6 \Lmax}\).
The inequality \(\hat u_t(\tilde k_t) > \hat u_t (\hat k_t)\) contradicts the definition of \(\hat k_t\).

For Case (2), we have
\begin{align}
    \nonumber
    \tilde u_t (k^*) - \tilde u_t (\tilde k_t)
     & =
    \tilde u_t (k^*) - \tilde u_t (\hat k_t) + \tilde u_t (\hat k_t) - \tilde u_t (\tilde k_t)
    \\\nonumber
     & = \sum_{\ell \in \mathcal{L}(k^*)} \log\tilde w_{\ell,t} - \sum_{\ell \in \mathcal{L}(\hat k_t)} \log\tilde w_{\ell,t}
    + \sum_{\ell \in \mathcal{L}(\hat k_t)} \log\tilde w_{\ell,t} - \sum_{\ell \in \mathcal{L}(\tilde k_t)} \log\tilde w_{\ell,t}
    \\\nonumber
     & = \sum_{\ell \in \mathcal{L}(k^*) \setminus \mathcal{L}(\hat k_t)} \log\tilde w_{\ell,t}
    - \sum_{\ell \in \mathcal{L}(\hat k_t) \setminus \mathcal{L}(k^*)} \log\tilde w_{\ell,t}
    + \sum_{\ell \in \mathcal{L}(\hat k_t) \setminus \mathcal{L}(\tilde k_t)} \log\tilde w_{\ell,t} - \sum_{\ell \in \mathcal{L}(\tilde k_t) \setminus \mathcal{L}(\hat k_t)} \log\tilde w_{\ell,t}
    \\\label{eq:case-2-tilde-w-to-w}
     & \ge \sum_{\ell \in \mathcal{L}(k^*) \setminus \mathcal{L}(\hat k_t)} \log w_{\ell}
    - \sum_{\ell \in \mathcal{L}(\hat k_t) \setminus \mathcal{L}(k^*)} \log w_{\ell}
    \\\label{eq:case-2-tilde-w-to-w-2}
     & \qquad\qquad\qquad\qquad
    + \sum_{\ell \in \mathcal{L}(\hat k_t) \setminus \mathcal{L}(\tilde k_t)} \left( \log w_{\ell} - \frac 4 3 \rad_{\ell,t} \right)
    - \sum_{\ell \in \mathcal{L}(\tilde k_t) \setminus \mathcal{L}(\hat k_t)} \left(\log w_{\ell} - \frac 4 3 \rad_{\ell,t} \right)
    \\\nonumber
     & = u(k^*) - u(\tilde k_t) - \frac 4 3  \sum_{\ell \in \mathcal{L}(\tilde k_t) \bigtriangleup \mathcal{L}(\hat k_t)} \rad_{\ell,t}
    \\\label{eq:case-2-to-gap}
     & \ge \Delta_{\ell'} - \frac{4\Delta_{\ell'}}{9}
    \\\nonumber
     & > 0,
\end{align}
where~\eqref{eq:case-2-tilde-w-to-w}-\eqref{eq:case-2-tilde-w-to-w-2} are because \(\log \tilde w_{\ell,t} = \log \left( \frac{2(\hat p_{\ell,t} - \rad_{\ell,t}) + 1}{3} \right) = \log \left( \frac{2(\hat p_{\ell,t} + \rad_{\ell,t}) + 1}{3} - \frac{4\rad_{\ell,t}}{3} \right)
\ge \log \frac{2(\hat p_{\ell,t} + \rad_{\ell,t}) + 1}{3} - \frac{4\rad_{\ell,t}}{3} \ge  \log w_\ell - \frac{4\rad_{\ell,t}}{3}\) for \(\ell \in \mathcal{L}(\hat k_t)\) and \(\log \tilde w_{\ell,t} \le \log w_\ell + \frac{4\rad_{\ell,t}}{3}\) for \(\ell \not \in \mathcal{L}(\hat k_t)\),
and~\eqref{eq:case-2-to-gap} is due to the definition of \(\Delta_{\ell'}\) and the condition \(\rad_{\ell',t} \le \frac{\Delta_{\ell'}}{6 \Lmax}\).
The inequality \(\tilde u_t(k^*) > \tilde u_t (\tilde k_t)\) contradicts the definition of \(\tilde k_t\).

Both cases yield contradictions, which implies that the link \(\ell'\) is not selected at time slot \(t\) when \(\rad_{\ell',t} > \frac{\Delta_{\ell'}}{6 \Lmax}\). Substituting the definition of \(\rad_{\ell',t}\), we have \(N_{\ell',t} \le \frac{36\Lmaxsquare C}{\Delta_{\ell'}} \log\left( \frac{2Lt^3}{\delta} \right)\).
Therefore, the algorithm terminates within \(O\left( \Lmaxsquare H C \log \left( \frac{LH}{\delta} \right) \right)\) time slots and output the best path with probability at least \(1 - \delta\).
\subsection{Proof for Algorithm~\ref{alg:path-level}}


\begin{proof}[Proof of Lemmas~\ref{lem:link-estimate} and~\ref{lem:link-estimate-qkd}]
    Assuming \(p^\pl(k) \in (\epsilon, 1)\) for some parameter \(\epsilon > 0\), we have \(\log p^\pl(k) \in (\log \epsilon, 1)\). Sampling this from the network benchmarking subroutine, we have the following lemma, which is equivalent to say that \(\log p^\pl(k)\) is \(\frac{1-\log \epsilon}{2}\)-sub-Gaussian. Then, with this sub-Gaussian property, we can devise the best arm identification for linear bandits algorithm as suggested by~\citep{tao2018best}.

    \citet[Corollary 16]{tao2018best} proves that, when the input sample size \(N \ge \frac{c_0(4L + (6+\epsilon)L^2)}{\epsilon^2}\log \frac{5L}{\delta}\), the estimates of \(\widehat{\log p_\ell}\) in Line~\ref{line:linear-regression} of Algorithm~\ref{alg:link-estimate} are accurate with probability at least \(1-\delta\), that is, \[
        \mathbb{P}\left( \abs{\bm x^T(k) \left(\log p_\ell - \widehat{\log p_{\ell}}\right)} \le \epsilon, \forall k\in\mathcal{K} \right) \ge 1 - \delta.
    \]

    \begin{lemma}\label{lma:scale-accuracy}
        For any two positive numbers \(0< a, b < 1\) and a parameter \(\epsilon > 0\),
        if \(\abs{\log a - \log b} \le \epsilon\),
        then we have \(\abs{\log \frac{2a + 1}{3} - \log \frac{2b+1}{3}} \le 2\epsilon\).
    \end{lemma}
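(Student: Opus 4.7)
The plan is to reduce the claim to a standard Lipschitz estimate obtained by composing with the logarithm. Define the real-valued function $\phi(y) \coloneqq \log\bigl((2e^y + 1)/3\bigr)$, so that $\log\frac{2a+1}{3} = \phi(\log a)$ and likewise for $b$. A direct differentiation yields $\phi'(y) = \frac{2e^y}{2e^y + 1}$, which lies strictly in $(0,1)$ for every $y \in \mathbb{R}$; in particular $\phi$ is $1$-Lipschitz on all of $\mathbb{R}$. Consequently, by the mean value theorem,
\[
\left|\log\frac{2a+1}{3} - \log\frac{2b+1}{3}\right| = |\phi(\log a) - \phi(\log b)| \le |\log a - \log b| \le \epsilon \le 2\epsilon,
\]
which is exactly the inequality claimed (and in fact a factor-of-two stronger than stated, which gives some useful slack).

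If one prefers a calculation-free route that avoids invoking the mean value theorem, the same bound follows by a short algebraic manipulation. By symmetry and the monotonicity of $x \mapsto (2x+1)/3$, we may assume $a \ge b$, in which case $\log(a/b) \le \epsilon$ gives $a \le b e^\epsilon$. Then
\[
1 \le \frac{2a+1}{2b+1} \le \frac{2b e^\epsilon + 1}{2b + 1} = e^\epsilon - \frac{e^\epsilon - 1}{2b+1} \le e^\epsilon,
\]
and taking logarithms recovers the same conclusion.

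No real obstacle is expected; the hypothesis $0 < a,b < 1$ is used only to make $\log a$ and $\log b$ well-defined real numbers, and the factor-of-two slack in the lemma statement comfortably absorbs any looseness introduced when this lemma is chained together with Lemma~\ref{lem:link-estimate} to prove Lemma~\ref{lem:link-estimate-qkd}.
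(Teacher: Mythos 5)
Your proof is correct, and it takes a genuinely different route from the paper's. The paper proceeds by bounding the ratio $\frac{2a+1}{2b+1}$ multiplicatively against $\frac{a}{b}$, showing $e^{-\epsilon}\frac{a}{b}\le\frac{2a+1}{2b+1}\le e^{\epsilon}\frac{a}{b}$, and then combining this $\epsilon$ of ``extra'' error with the assumed bound $\abs{\log a-\log b}\le\epsilon$ to land on $2\epsilon$ --- that additive combination is precisely where the paper's factor of $2$ originates. You instead observe that the whole transformation is the composition $\phi(y)=\log\bigl((2e^{y}+1)/3\bigr)$ applied to $\log a$ and $\log b$, and that $\phi'(y)=\frac{2e^{y}}{2e^{y}+1}\in(0,1)$ makes $\phi$ a contraction on all of $\mathbb{R}$; the mean value theorem then gives the bound $\epsilon$ outright, a factor of two sharper than the lemma as stated (your algebraic fallback, $1\le\frac{2a+1}{2b+1}\le\frac{2be^{\epsilon}+1}{2b+1}\le e^{\epsilon}$, reaches the same sharper constant). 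Both approaches are valid and short; yours has the advantage of exposing \emph{why} the map $p\mapsto\log\frac{2p+1}{3}$ cannot amplify errors at all (its derivative in the $\log$-domain is bounded by $1$), which would let the authors tighten Lemma~\ref{lem:link-estimate-qkd} from $2\epsilon$ to $\epsilon$ and drop the corresponding adjustment to the pruning threshold in \bpipath; the paper's version has the minor advantage of staying entirely within elementary inequalities on $a$ and $b$ without introducing an auxiliary function. As you note, the hypothesis $0<a,b<1$ plays no essential role beyond keeping the logarithms real, and your argument holds for all positive $a,b$.
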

    \begin{proof}[Proof of Lemma~\ref{lma:scale-accuracy}]

        We first show that \(
        e^{-\epsilon} \frac a b \le \frac{2a + 1}{2b + 1}
        \le e^{\epsilon} \frac a b,
        \)
        that is \(e^{-\epsilon}  \le \frac{2ab + b}{2ab + a}
        \le e^{\epsilon},\) which can be derived as follows, \begin{equation}
            \label{eq:2a-to-a}
            e^{-\epsilon} = \frac{1}{\epsilon} \le \frac{2a + 1}{2a + e^\epsilon} \overset{(a)}\le \frac{2ab + b}{2ab + a} \overset{(b)}\le \frac{2a+1}{2a+e^{-\epsilon}} \le \frac{1}{e^{-\epsilon}} = e^\epsilon,
        \end{equation}
        where inequalities (a) and (b) are due to \(\frac{a}{b} \in (e^{-\epsilon}, e^\epsilon)\).

        Then, we rearrange the inequality's LHS as follows,
        \[
            \log \frac{2a + 1}{3} - \log \frac{2b+1}{3}
            = \log \frac{2a + 1}{2b + 1}
            \overset{(a)}\in \left( -\epsilon + \log \frac a b, \epsilon + \log \frac a b \right)
            \overset{(b)}\subseteq (-2\epsilon, 2\epsilon),
        \]
        where inequality (a) is by~\eqref{eq:2a-to-a},
        and inequality (b) is  due to \(\frac{a}{b} \in (e^{-\epsilon}, e^\epsilon)\).
    \end{proof}

    With the above lemma, we have  \[
        \mathbb{P}\left( \abs{\bm x^T(k) \left(\log \frac{2p_\ell + 1}{3} - \log \left( \frac{2\log (\exp \widehat{\log p_{\ell,t}})+ 1}{2} \right)\right)} \le 2\epsilon, \forall k\in\mathcal{K} \right) \ge 1 - \delta.
    \]
\end{proof}

\end{document}